\DeclareMathOperator*{\argmax}{arg\,max}
\def\BState{\State\hskip-\ALG@thistlm}
\newcommand{\supp}{\operatorname{supp}}
\newcommand{\E}{\mathbb{E}}
\newcommand{\tp}{\mathsf{T}}
\newcommand{\R}{\mathbb{R}}
\newtheorem{theorem}{Theorem}
\newtheorem{corollary}{Corollary}
\newtheorem{proposition}{Proposition}
\newtheorem{definition}{Definition}
\newtheorem{assumption}{Assumption}
\def\BState{\State\hskip-\ALG@thistlm}
\def\BibTeX{{\rm B\kern-.05em{\sc i\kern-.025em b}\kern-.08em
    T\kern-.1667em\lower.7ex\hbox{E}\kern-.125emX}}
\begin{document}
%
\title{Transparent Tagging for Strategic Social Nudges on User-Generated Misinformation}
%
%
%
%

\author{Ya-Ting~Yang, Tao~Li, and Quanyan~Zhu        
\IEEEcompsocitemizethanks{\IEEEcompsocthanksitem The Authors are with the Department of Electrical and Computer Engineering, New York University, Brooklyn, NY, 11201, USA; E-mail: {\tt\small \{yy4348, tl2636, qz494\}@nyu.edu}. \protect\\
\IEEEcompsocthanksitem Y-T. Yang and T. Li have contributed equally. Correspondence should be addressed to T. Li (\texttt{tl2636@nyu.edu}).}}
\IEEEtitleabstractindextext{%
\begin{abstract}
Social network platforms (SNP) rely heavily on user-generated content to attract users, yet they have limited control over content provision, which leads to misinformation. As countermeasures, SNPs have implemented policies to notify users by tagging the content and influencing users' responses to the tagged content. The population-level response creates a social nudge to the content provider that encourages it to supply more authentic content. Yet, when designing tags to leverage social nudges, SNP must be cautious about misdetection, which impairs its ability to create social nudges. We establish a Bayesian persuaded branching process to study SNP's tagging policy design under misdetection. Misinformation circulation is modeled by a multi-type branching process, where users are persuaded through tags to give positive/negative comments that influence misinformation spread. When translated into posterior belief space, the SNP's problem is reduced to an equality-constrained optimization, the optimal condition of which is given by the Lagrangian characterization. The key finding is that SNP's optimal policy is transparent tagging, albeit misdetection, which nudges the provider not to generate misinformation.  
\end{abstract}

\begin{IEEEkeywords}
Misinformation, social networks, Bayesian persuasion, multi-type branching processes, perfect Bayesian equilibrium 
\end{IEEEkeywords}}

\maketitle

\IEEEdisplaynontitleabstractindextext

%
\IEEEpeerreviewmaketitle

\section{Introduction}
Social network platforms (SNP), such as X and TikTok, where users create and consume content, play an increasingly important role in society. These platforms rely heavily on user-generated content (UGC) to engage and retain users to maintain high-level daily activity. Since users who generate original content(``content providers'') are not paid workers, platforms have limited control over the UGC, including misinformation.  

User-generated misinformation has become a growing concern on SNPs, as false information can spread rapidly and have significant consequences \cite{Zhao2020-zu}. For instance, false stories about candidates were shared widely through SNPs during the 2016 US presidential election; misinformation about the virus, mask-wearing policies, and vaccine concerns spread through social networks during the COVID-19 pandemic. To address this issue, SNPs have implemented policies such as labeling, tagging, or notifying to alert users to potentially false or misleading information \cite{twitter, fb_transparency}. 

\begin{figure}
    \centering
    \includegraphics[width=3.6in]{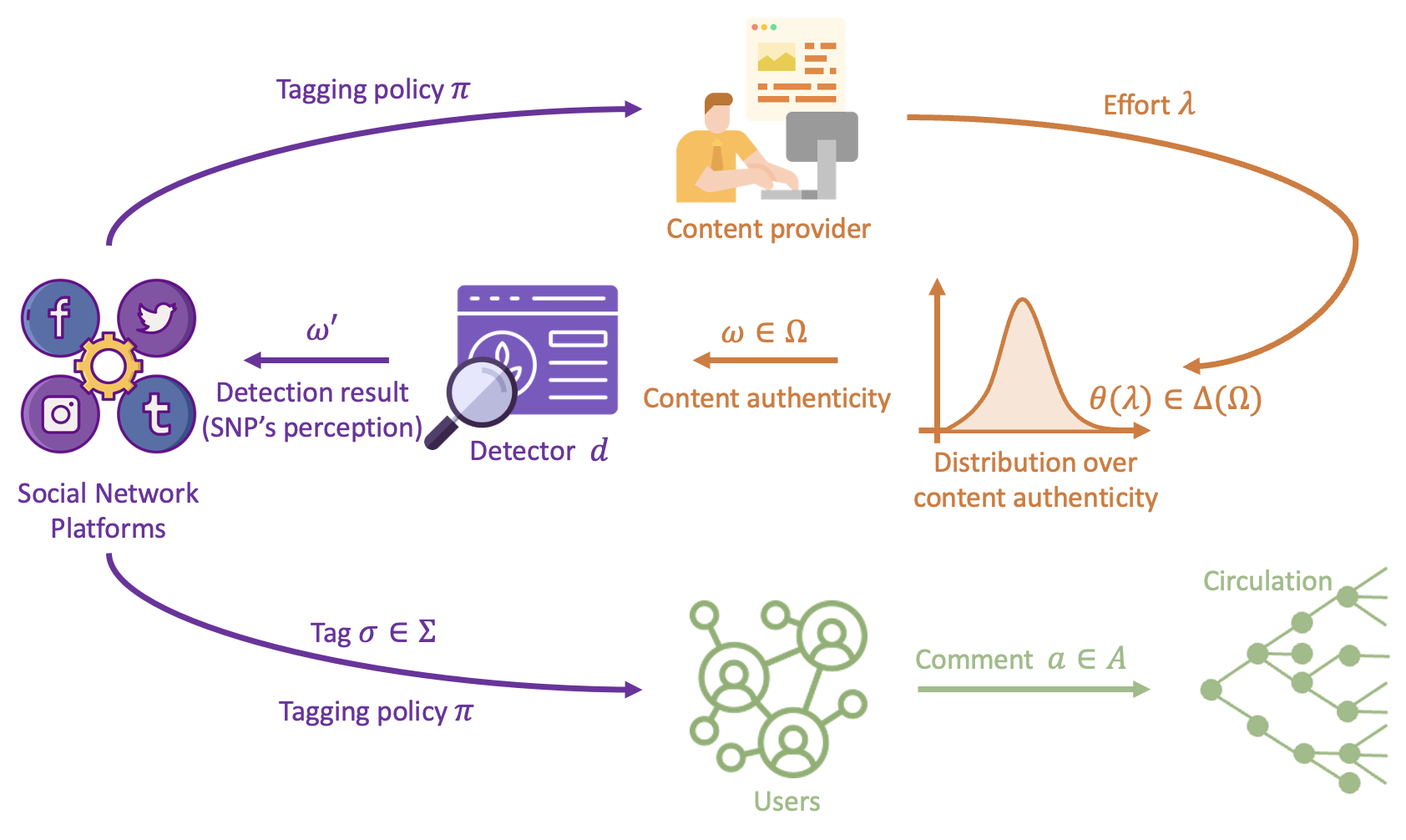}
    \caption{An illustration of the proposed persuasion model, where the misinformation distribution $\theta(\lambda)$ is affected by the content provider and remains unknown to the user. The SNP's misdetection of the underlying content is modeled by $d$.}
    \label{fig:per-mis}
\vspace{-3mm}
\end{figure}

Previous studies have shown that these policies effectively (to some extent) curb the spread of misinformation \cite{ Platform_intervention}. One of the key reasons is that these platforms feature intensive social interactions among users, which can be leveraged to create social nudges in stimulating UGC supply \cite{zeng23social-nudges}. For example, a post tagged as misleading will inflict users' negative comments. After circulation on social networks, the population response to the post creates pressure on the content provider, discouraging it from generating misinformation.   

This work proposes a persuasion game model to provide theoretical underpinnings for the SNP's tagging design, aiming to harness the power of social nudges to reduce user-generated misinformation. As illustrated in \Cref{fig:per-mis}, the strategic interactions among the SNP, the content provider, and the user unfold as below. The SNP designs a tagging policy whose realized tags indicate the content authenticity of an arbitrary post returned by a detection device. Of particular note is that the detection device, usually empowered by artificial intelligence methods \cite{hakak2021ensemble,islam2020deep,10631663}, is often imperfect and may misclassify the post's authenticity. Such a tagging policy does not directly control the content provider or user but influences others' behaviors through information provision. Hence, this tagging policy is referred to as the information structure \cite{tao_info}. Fully aware of this policy, the content provider exerts a private effort (unobservable to the SNP or user) in creating the content, assuming that the more effort exerted, the more authentic the content is. Finally, the user observes the tagging policy and the realized tags and then decides on their views and comments that influence the online circulation modeled by a multi-type branching process.

The proposed model differs from the seminal Bayesian persuasion game \cite{kamenica11BP} in that the user cannot directly observe the prior distribution. Consequently, the user must form a conjecture about the content provider's behavior to update their beliefs. This conjecture must be consistent with the provider's equilibrium behavior, which leads to the concept of perfect Bayesian equilibrium (PBE) as the natural solution concept for our game. 
One prior work \cite{yang2023designing} addressed a special case where there was no detection error, allowing the SNP to identify misinformation in posts perfectly. However, in practical scenarios, detection errors are inevitable. In this work, the SNP's design problem considers such misdetection, which leads to the SNP's misperception of the game state that impairs the tagging policy's credibility and effectiveness in fostering social nudges.

Our key finding is that transparent tagging, where the SNP honestly discloses the detection outcome to the content provider and user, is most effective in combating misinformation generation and circulation. Although the SNP may not have direct control over content generation, it can nudge user perceptions through tagging. The collective behaviors of users, under these perceptions, determine the content provider's reputation, effectively making users the SNP's proxy in terms of incentive provision, encouraging the provider to exert the best effort in reducing misinformation generation. \textbf{Our contributions} are summarized below. 
\begin{itemize}
    \item We propose a three-player Bayesian persuasion game that studies the SNP’s tagging policy under the presence of misdetection and the content provider’s intention to uphold its reputation, with misinformation circulation among users modeled as a multi-type branching process.
    \item We identify players' strategies under perfect Bayesian equilibrium by transforming the problem into the posterior belief space, reducing it to an equality-constrained convex optimization problem.
    \item We characterize the optimal conditions using a Lagrangian approach, demonstrating that the SNP's optimal policy is transparent tagging despite detection errors, incentivizing the content provider to exert maximum implementable effort.
\end{itemize}

\section{Literature Review}

Existing research on misinformation mainly explores scenarios involving a finite set of players (users), typically modeled as nodes in a graph, with the reliability of articles, news, and other content drawn from a ``known'' distribution \cite{model_online_mis, enga_mis}. This line of work often explores how misinformation spreads through different networks and the roles different factors play in circulation. For example, \cite{model_online_mis} introduces a model that analyzes the online sharing behavior of fully Bayesian users when faced with potential misinformation. This study highlights the significant impact of network structure on misinformation propagation, demonstrating that platforms designed to maximize user engagement may inadvertently facilitate the spread of false information. \cite{enga_mis} considers two common objectives for platforms: maximizing user engagement or minimizing the spread of misinformation. By analyzing different strategies, the research provides insights into how platforms can either contribute to or mitigate the dissemination of false content, depending on their underlying goals. Additionally, \cite{sasaki2024misinformation} focuses on how content moderation policies can be designed to enable dominant platforms to enforce regulations without losing users or news sources to competing platforms.

In contrast, our approach considers the population-wide effects of misinformation circulation \cite{papanastasiou2020fake} to examine broader social dynamics and impacts. Specifically, we analyze the proportion of individuals receiving negative comments among all receivers using branching processes, which is shown to closely align with the statistical characteristics of information cascades observed in real-world social media platforms, such as those on Twitter \cite{BP_cascade_twitter}. Besides, results from branching processes have also been utilized in identifying key determinants behind the spread of misinformation \cite{spreading_mis}. Rather than analyzing misinformation circulation through branching processes \cite{BPBP_tagging}, our approach takes a proactive stance by aiming to prevent misinformation from being created in the first place. We extend the classical Bayesian persuasion framework \cite{kamenica11BP} by introducing a third player—the content provider. This addition shifts the focus from merely understanding how misinformation spreads or mitigating misinformation \cite{yang2024prada} to actively controlling its generation. In our model, the SNP aims to curb misinformation spread by incentivizing content providers to produce authentic and truthful content.

In practice, verifying whether a post contains misinformation involves costs and potential errors during the platform’s detection process \cite{aimeur2023fake}. For instance, human-based detection methods, such as crowdsourcing \cite{micallef2020role}, audit \cite{yang2025herd}, and fact-checking \cite{chung2021learn}, often depend on human (expertise) to verify content and are not only time-intensive but the effectiveness of fact-checking initiatives remain questionable \cite{andersen2020communicative}. In contrast, AI-based methods, including classical machine learning \cite{hakak2021ensemble}, deep learning \cite{islam2020deep}, as well as foundation models \cite{10631663, xie2024learning}, provide faster detection but require significant computational resources and still face inevitable detection errors. In this work, we address these limitations by incorporating the detection errors, whether from the detection algorithms or resource limitations, into the design of the platform's tagging policy, enhancing the previous framework \cite{yang2023designing} by considering the platform's real-world challenges.

\section{Online Misinformation Circulation: A Bayesian Persuasion Modeling}

This section introduces a three-player persuasion game that models the interactions between an SNP, content providers, and users. Misinformation circulation on the SNP typically involves many content providers and users. However, to simplify our analysis, we focus on a representative content provider and a homogeneous population of users with identical utilities. For strategic reasoning within the persuasion game, we refer to a representative user as ``the user'' since all users share the same interests. Conversely, when discussing population-level misinformation dissemination using branching processes, we refer to the collective as ``users''.

\subsection{The Bayesian Persuaded Branching Processes Model}
In this persuasion game, the SNP (sender) designs a tagging policy (signaling scheme) about an unknown state that reflects the authenticity of the content of the post (state). The content provider (agent), fully aware of the tagging policy, exerts a private effort in creating the content, which is unobservable to both the SNP and the user (receiver). As the content provider represents a population of providers, the level of effort put into determining the truth influences the content's authenticity, with lower effort leading to more misinformation prevailing over SNPs. Finally, the user takes action by commenting on the post and sharing it with their followers after observing the tagging policy and the tag (signal) realization. It is worth noting that the state variable remains hidden from the user throughout the game, as individuals lack the necessary resources to verify the authenticity of the content. In this context, the SNP aims to incentivize the agent's effort in supplying authentic content \emph{and} persuade the receiver to choose a desirable action. 

The action taken by the user results in a \emph{trend} (negative or positive about the post) in social media. To understand this notion, we consider a multi-type branching process (introduced later in Section \ref{sec:branching}). Denote by $N(t)$ the number of users who have just received the post with a negative comment at continuous time $t$ ($n$-type user). Similarly, $P(t)$ denotes the number of users who have received a positive comment ($p$-type user). After reading the received post, users forward it to some of their followers/friends with their own (either negative or positive) comments, producing ``offsprings'' (the new $n/p$-type users). The trend is measured through the proportion of negative comments over all the comments: $\eta(t)={N(t)}/(N(t)+P(t))$.  

In the persuasion literature \cite{kamenica11BP}, a key assumption is that the state distribution is revealed to the sender when deciding the tag realization under the designed policy. In the context of online misinformation circulation, this assumption is based on the premise that an SNP, as an institution, has the necessary resources to verify the authenticity of each post, as discussed in our prior work \cite{yang2023designing}. In this work, we address a more practical scenario by relaxing this assumption and considering that SNPs may have perceptions about the true state. These misperceptions could stem from the large volume of posts being made simultaneously while the SNP has limited capabilities and resources or from the error of misinformation detection. We introduce a mapping $d:\Omega\rightarrow \Delta(\Omega)$ that maps the actual state $\omega$ to a Borel probability measure $d(\cdot|\omega)$ (all sets in the models are endowed with Borel topology). From this measure, a new state $\omega'$ is sampled and becomes the SNP's misperception of the true state. In the misdetection scenario, $d(\omega'|\omega)$ gives the detection error rate of misclassifying $\omega$ as $\omega'$. For the rest of the paper, we refer to the SNP, the content provider, and the user as the sender, the agent, and the receiver, respectively. Of particular note is that the information structure regarding this misperception (who knows such a mapping) can lead to different treatments on equilibrium. Here we focus on the case where \textit{the sender, the agent, and the receiver are all aware of such misperception $d$}, which is motivated by the fact that SNPs may be increasingly required to disclose their misinformation detection and moderation due to transparency policies \cite{HR9126_118thCongress}.

To summarize the discussion above, the persuasion game is given by the tuple $\left\langle \Omega, \Sigma, \Lambda, \theta, d, \mathcal{A}, u_S, u_A, u_R \right\rangle$, where 
\begin{enumerate}[i)]
    \item $\Omega$ is the state space, and $\omega\in \Omega$ reflects how authentic the content of the post is;
    \item $\Sigma$ is the signal space of the sender, and $\sigma\in \Sigma$ denotes the tag associated with the post;
    \item $\Lambda$ is the action set of the agent, and each $\lambda\in \Lambda$ represents how much effort the agent exerts in producing trustworthy content;
    \item $\theta: \Lambda\rightarrow \Delta(\Omega)$ is the control function of the agent, whose effort $\lambda$ is turned into the state distribution $\theta(\cdot|\lambda)$ over the level of authenticity of the content $\Omega$;
    \item $d: \Omega\rightarrow\Delta(\Omega)$ is the sender's misperception, which maps the realized state $\omega$ to another state $\omega'$ following the distribution $\omega'\sim d(\cdot|\omega)$. This misperception is common knowledge.
    \item $\mathcal{A}$ is the action set of the receiver, which is a continuum $[0, 1]$, and $a \in \mathcal{A}$ denotes the probability of offering a positive comment;
    \item $\eta^*$ is the proportion of negative comment $\eta(t)$ as $t \rightarrow \infty$ obtained from the stabilized multi-type branching processes, which is related to the reputation of the agent and the impact of misinformation spreading; 
    \item $u_S: \Omega\times \mathcal{A}\rightarrow \mathbb{R}$, $u_A: \mathcal{A}\times \Lambda\rightarrow \mathbb{R}$, $u_R:\Omega\times \mathcal{A}\rightarrow\mathbb{R}$ are utility functions of the sender, the agent, and the receiver, respectively. The definitions of these utilities are as follows.
\end{enumerate}
A few remarks are in order. The state distribution $\theta(\cdot|\lambda)$ represents the misinformation circulation level, such as the percentage of fake news or misinformed posts on a social media platform \cite{konopliov2024fakenews}. The misperception $d(\cdot|\omega)$ reflects the false alarm rate, which stems from errors in human fact-checking \cite{andersen2020communicative} or AI-based classification systems \cite{islam2020deep,10631663}.

\textbf{The Receiver's Utility.}
To minimize the mismatch between the comment and the truth, the receiver's utility is $ u_R(\omega, a)=-(a-\omega)^2$. Suppose that the receiver believes that the state variable is subject to $\mu\in \Delta(\Omega)$, its best response under this belief is 
\begin{equation}
    a^*(\mu)=\argmax_{a\in [0,1]}\E_{\omega\sim \mu}[-(a-\omega)^2]=\E_{\mu}[\omega].
\label{eq:a-star}
\end{equation}  

\textbf{The Agent's Utility.}
The agent is concerned with the effort and its reputation measured through $\eta^*$ (the proportion of negative comments on its post). Denote by $c(\lambda)$ the cost induced by the effort $\lambda$; and by $r_A(a)=1-\eta^*(a)$ the agent's reputation when the receiver responds with $a$. Here, $\eta^*(a)$ is the proportion of negative feedback, and $1 - \eta^*(a)$ represents the proportion of positive comments that reflect the population level of ratings toward to produced content. In this case, the agent's utility is given by   
\begin{equation}
    u_A(a,\lambda)=r_A(a)-c(\lambda).
\label{eq:u_a}
\end{equation}

\textbf{The Sender's Utility.}
The sender’s goal is to mitigate
the influence of misinformation: 
the sender prefers more positive comments on authentic posts. Define
\begin{equation}
    u_S(\omega, a) = \omega(1 - \eta^*(a)),
\label{eq:u_si}
\end{equation} where $1 - \eta^*(a)$ represents the proportion of positive comments, and $\omega$ reflects the content's authenticity. This form implies that the sender benefits from a positive trend of authentic content, with the goal of reducing misinformation being implicit. 


The game unfolds in three stages. 1) In the first stage, the sender, aware of the misperception $d$, designs and commits to a signaling scheme $\pi: \Omega\rightarrow \Delta(\Sigma)$, specifying a condition distribution $\pi(\cdot|\omega')$ over the signal space. Note that both the misperception and the signaling scheme are known to the other two players.  2) Second, observing the signaling $\pi$, the agent chooses an private effort $\lambda$ to determine a favorable distribution over the state space $\theta(\cdot|\lambda)\in \Delta(\Omega)$. Note that the effort $\lambda$ is unobservable to both the sender and the receiver.  3) Finally, nature draws a state realization from $\theta(\cdot|\lambda)$, which is then distorted by $d(\cdot|\omega)$ and finally reveals distorted $\omega'$ to the sender. The sender then transmits a signal $\sigma$ (tag on the post) according to the commitment to the receiver, who, aware of both the signaling scheme and the misperception, chooses an action (determining how positively to comment on the post). A schematic illustration is provided in Fig. \ref{fig:per-mis}.

\subsection{Perfect Bayesian Equilibrium}\label{sec:PBE}
What distinguishes the introduced model from the classical Bayesian persuasion \cite{kamenica11BP} is that the receiver now does not explicitly acquire the prior distribution $\theta(\lambda)$, as $\lambda$ is unobservable. Hence, when the receiver acts, they must resort to a conjecture on the agent's action to update the posterior beliefs. This conjecture must be consistent with the agent's equilibrium choice, which naturally leads to the perfect Bayesian equilibrium (PBE) distinct from the subgame perfect equilibrium considered in the standard persuasion game \cite{tao23pot}.
In addition to the solution concept, another notable difference regards the priors. The prior, in this case, for three players is distorted by the sender's misperception $d$. 

We briefly state the PBE characterization, and details are presented in the ensuing subsection where a binary setting is considered. A PBE of the proposed persuasion game consists of a tagging policy $\pi$, the agent's effort $\lambda$, and a belief system $\{\mu_\sigma, \sigma\in \Sigma\}$\footnote{A belief system is a collection of posterior beliefs $\mu_\sigma$, and $\mu_\sigma$ denotes the belief when receiving signal $\sigma$.}, which satisfies the following properties:
\begin{enumerate}[i)]
    \item given a signaling $\pi$ (sender) and a belief system $\{\mu_\sigma, \sigma\in \Sigma\}$ (receiver), the agent's effort $\lambda$ maximizes their expected utility, i.e.,
    \begin{align}
        &\lambda =\argmax \sum_{\omega}\theta(w|\lambda)\sum_{\sigma, \omega^\prime}d(\omega^\prime|\omega)\pi(\sigma|\omega^\prime)u_A(\mu_\sigma, \lambda)\label{eq:lambda-argmax},\\
        & u_A(\mu_\sigma, \lambda)= u_A(a^*(\mu_\sigma), \lambda),\nonumber
    \end{align}
    \item the receiver's belief is consistent with the agent's effort $\lambda$ and the signaling $\pi$, i.e.,
    \begin{align}
        &\mu_\sigma=\frac{d^\tp \pi(\sigma|\cdot)\odot   \theta(\cdot|\lambda)}{\langle d^\tp\pi(\sigma|\cdot) ,  \theta(\cdot|\lambda) \rangle}, \label{eq:consistency_d}\\
        & \pi(\sigma|\cdot)= [\pi(\sigma|\omega_1),\ldots, \pi(\sigma|\omega_N)]\in \mathbb{R}^{|\Omega|},\\
        & \theta(\cdot|\lambda)= [\theta(\omega_1|\lambda), \ldots, \theta(\omega_N|\lambda)]\in \mathbb{R}^{|\Omega|},
    \end{align}
    where $\odot$ denotes the pointwise product and the distorted prior  by the sender’s misperception
    can be understood as a matrix publication: $d\in \R^{|\Omega|\times |\Omega|}$ with $d_{ij}=d(\omega_i|\omega_j)$; $\theta_j=\theta(\omega_j|\lambda)\in \R^{|\Omega|}$,
    \item the signaling maximizes the sender's expected utility, i.e.,
    \begin{align}
    \label{eq:sender-max}
        \pi\in\argmax \sum_{\omega}\theta(\omega|\lambda)\sum_{\sigma, \omega'}d(\omega'|\omega)\pi(\sigma|\omega')u_S(a^*(\mu_\sigma), \omega ).
    \end{align}
\end{enumerate}

\subsection{Binary-State Case}
We use a binary case study for simplicity, where the state space consists of two elements $\Omega=\{0,1\}$ with $0$ indicating the content contains misinformation while $1$ represents the content is authentic. Hence, the signal space is also assumed to be binary: $\Sigma=\{0,1\}$, where $0$ and $1$ denote the ``fake'' and ``real'' tags, respectively. Since the state space is binary, the corresponding prior distribution of the authenticity of the content lives in the simplex spanned by $\theta_0=[1,0]$ and $\theta_1=[0,1]$. Therefore, we assume that the effort $\lambda$ spent by the agent is a scalar from $[0,1]$, and the resulting prior distribution is the convex combination of $\theta_0$ and $\theta_1$: $\theta(\lambda)=(1-\lambda)\theta_0+\lambda \theta_1$. In this binary setup, the misperception $d$ is given by a 2-by-2 stochastic matrix: 
\begin{align}
    d=\begin{bmatrix}
    1-\varepsilon_0 & \varepsilon_1\\
    \varepsilon_0 & 1-\varepsilon_1
\end{bmatrix},
\end{align}
where $\varepsilon_0$ and $\varepsilon_1$ can be interpreted as the false alarm rates under $\omega=0$ and $\omega=1$, respectively.
\begin{assumption}
    For the false alarm rates $\varepsilon_0,\varepsilon_1 \in \mathbb{R}_{\ge 0}$, we assume $\varepsilon_0 + \varepsilon_1 < 1$.
\label{assump:eps}
\end{assumption}

As the state space is finite, the players' strategies are finite-dimensional vectors, and hence, we can ``vectorize'' our analysis so that convex analysis tools can be utilized. Let $v_A(\mu)=r_A(a^*(\mu))$ denote the agent's payoff under the receiver's belief $\mu$, and $\Bar{v}_A^d(\omega|\pi):=\sum_{\sigma}\sum_{\omega'}d(\omega'|\omega)\pi(\sigma|\omega')v_A(\mu_\sigma)$ denote the agent's expected payoff conditional on the generated state $\omega$ under the signaling $\pi$ considering the misperception $d$. Then, let $\Vec{v}_A^d(\pi)$ be the corresponding vector: $\Vec{v}_A^d(\pi)=[\Bar{v}_A^d(0|\pi),\Bar{v}_A^d(1|\pi) ]$. Similarly, we have the following notations for the sender. Given the receiver's belief $\mu$, the sender's expected payoff is denoted by $v_S(\mu):=\E_{\omega\sim \mu}[u_S(a^*(\mu),\omega)]$. Let $\Bar{v}^d_S(\omega|\pi)=\sum_{\sigma}\sum_{\omega'} d(\omega'|\omega)\pi(\sigma|\omega')v_S(\mu_\sigma)$ and $\Vec{v}_S(\pi)^d:=[\Bar{v}^d_S(0|\pi), \Bar{v}^d_S(1|\pi)]$.

Additionally, we impose the following customary assumption \cite{kamenica11BP, boleslavsky18moral-hazard} on the cost of effort to ensure that the agent's equilibrium problem is well-behaved. This assumption maintains generally in our analysis, with the numerical study specifying the cost as $k\lambda^2$, where $k \in \mathbb{R}_{\ge0}$ is a parameter.
\begin{assumption}
    For the agent's utility given by (\ref{eq:u_a}), we assume that $r_A(\cdot)$ is non-negative and bounded, and $c(\cdot)\in C^2$ is strictly increasing and convex. In addition, $c(0)=\nabla c(0)=0$, and $\nabla c(1)>1$.
\label{assump:cost}
\end{assumption}

To characterize the PBE in the proposed model, we need the backward induction, i.e., first analyzing the optimality actions of the receiver, then the agent, and finally the sender. To begin with, the receiver's best response (comment) under the belief $\mu$ is given by \eqref{eq:a-star}. The best-response $a^*(\mu_\sigma)$ then affects the spread of misinformation in social media through branching processes presented in Section \ref{sec:branching}.

\section{Content Spreading Through Multi-type Branching Process}
This section treats the spread of misinformation through branching processes. Specifically, we focus on the evolution of the trend $\eta(t)$, the proportion of negative comments, as the receiver forwards the post to others. One key finding is that the evolutionary dynamics of $\eta(t)$ under the branching process stabilizes in the limit, and the receiver's belief completely determines the stationary point $\eta^*$. 

\subsection{Multi-type Branching Processes}\label{sec:branching}
Suppose that the number of the receiver's friend $M$ is independent and identically distributed with expectation $\mathbb{E}[M]=m_{M}$ and is finite. The receiver shares the post with $Bin(M, q)$ friends, where $q \in [0, 1]$ represents the impact or attractiveness of the post (assumed to be constant). Hence, the number of ``offspring'' (friends receiving the sharing) of the receiver, denoted by $\xi$, is subject to a binomial distribution: $\xi \sim Bin(M, q)$ with $\mathbb{E}[\xi]=m_{M}\cdot q:= m$.

Let $t$ denote the continuous time, and let $t_i$ represent the time at which the $i$-th user ``wakes up'', meaning that this individual becomes active on an SNP and is ready to share the post. Denote by $N_i = N(t_i^+)$, $P_i = P(t_i^+)$, and $Z_i = N_i + P_i$, where $t_i^+$ represents the right-hand limit of $t_i$. This enables our analysis of the branching process at transition times (i.e., when a user wakes up) by discretizing the continuous $N(t)$ and $P(t)$ into their corresponding discrete counterparts, $N_i$ and $P_i$, thereby forming $Z_i$. Moreover, let $\xi_i \overset{i.i.d.}{\sim} \text{Bin}(M, q)$. Then, if the $n$-type receiver (who receives negative comments) wakes up at $t_{i+1}$, then 
\begin{equation}
  \begin{aligned}
    N_{i+1} &= N_{i}-1 + \textbf{1}_{n} \xi_i,\\
    P_{i+1} & = P_{i} + \textbf{1}_{p} \xi_i,
\end{aligned}  \label{eq:x-wake}
\end{equation}
and if the $p$-type receiver wakes up, 
\begin{equation}
    \begin{aligned}
    N_{i+1} &= N_{i} + \textbf{1}_{n} \xi_i,\\
    P_{i+1} &= P_{i} -1 + \textbf{1}_{p} \xi_i.
\end{aligned}\label{eq:y-wake}
\end{equation}
 where the indicator function $\textbf{1}_{n}$ means that the receiver makes a negative comment while $\textbf{1}_{p}$ indicates the opposite (the positive comment). The total population is updated by $Z_{i+1} = Z_{i} -1 + \xi_i$.

The probability of a receiver who receives the post with a negative comment also commenting negatively can be captured by a negative-to-negative factor $\alpha_{nn}(\sigma)$, which depends on the tag $\sigma$. Similarly, the positive-to-negative factor $\alpha_{pn}(\sigma)$ represents the probability of a receiver leaving a negative comment after receiving and viewing the post with a positive comment. As the receiver's comment only depends on the belief $\mu_\sigma$ [see the best response in \eqref{eq:a-star}], $\alpha_{nn}(\sigma)=\alpha_{pn}(\sigma)=1-a^*(\mu_\sigma)=1-\E_{\mu_\sigma}[\omega]$. That is, a higher $E_{\mu_\sigma}[w]$ indicates greater confidence from the receiver regarding the authenticity of the post's content, making them less likely to leave a negative comment.

\subsection{Stochastic Approximation Analysis}

To analyze the limit trend of the process, we apply stochastic approximation \cite{SA_for_BP} and consider the continuous-time dynamics of the multi-type branching process. Since there are only two types in the branching process, it suffices to consider the dynamics of the total population and that of the $n$-type. Toward this end, let $\Bar{Z}_i = \frac{Z_i}{i}$, $\Bar{N}_i = \frac{N_i}{i}$, and $\gamma_i = \frac{1}{i+1}$, and then we aggregate the branching equations in \eqref{eq:x-wake} and \eqref{eq:y-wake}, leading to the following:  
\begin{equation}
    \begin{aligned}
    \Bar{Z}_{i+1} = \Bar{Z}_i &+ \gamma_{i}\big(\xi_i -1-\Bar{Z}_i\big) \textbf{1}_{\{\Bar{Z}_i > 0\}}, \\
    \Bar{N}_{i+1} = \Bar{N}_i &+ \gamma_{i}\big[\textbf{1}_{\{n-wakes\}}\big(\textbf{1}_{n} \xi_i -1\big)\\
    &+ \textbf{1}_{\{p-wakes\}}\textbf{1}_{n} \xi_i - \Bar{N}_i \big] \textbf{1}_{\{\Bar{Z}_i > 0\}},
\end{aligned}\label{eq:discrete-ode}
\end{equation}
where $\mathbb{E}[\textbf{1}_{\{n-wakes\}}]=\frac{\Bar{N_i}}{\Bar{Z_i}}$, $\mathbb{E}[\textbf{1}_{\{p-wakes\}}]=1-\frac{\Bar{N_i}}{\Bar{Z_i}}$ indicate the probabilities of a receiver of $n$-type and $p$-type wakes up. Let $\Bar{N}_0 = N_0$, $\Bar{Z}_0 = N_0+P_0$ be the initial conditions. As the discrete-time trajectory of \eqref{eq:discrete-ode} is an asymptotic pseudo-trajectory of the continuous-time system in \eqref{eq:cont-ode} \cite{SA_for_BP}, the two systems share the same limiting behavior. Hence, we arrive at \Cref{prop:sa}.
\begin{equation}
    \begin{aligned}
    \Dot{z} &= h^z(z, n) = (m-1-z)\textbf{1}_{\{z > 0\}}, \\
    \Dot{n} &= h^n(z, n) = \big[\eta\big(\alpha_{nn}(\sigma) \cdot m-1\big)\\ &+ (1-\eta)\alpha_{pn}(\sigma) \cdot m - n \big]\textbf{1}_{\{z > 0\}}, \eta = \frac{n}{z}
\end{aligned}\label{eq:cont-ode}
\end{equation}
\begin{proposition}
    Consider $\mathbb{E}[M^2] < \infty$ in the multi-type branching process, the $\{\Bar{Z}_{i}\}, \{\Bar{N}_{i}\}$ sequences converge to $\Bar{Z}^*, \Bar{N}^*$ almost surely, where $\Bar{Z}^* = m-1$ and $\Bar{N}^*=\eta^*(\sigma) \Bar{Z}^*$ with $\eta^*(\sigma) = \frac{\alpha_{pn}(\sigma)}{1-\alpha_{nn}(\sigma)+\alpha_{pn}(\sigma)}$ are solutions to \eqref{eq:cont-ode}.
\label{prop:sa}
\end{proposition}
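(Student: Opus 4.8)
The plan is to recognize \eqref{eq:discrete-ode} as a Robbins--Monro stochastic approximation recursion and to invoke the ODE (mean-field) method \cite{SA_for_BP} to transfer the asymptotic behavior of the discrete iterates onto the flow of \eqref{eq:cont-ode}. First I would write the joint iterate $x_i=(\bar Z_i,\bar N_i)$ in the canonical form $x_{i+1}=x_i+\gamma_i\big(h(x_i)+U_{i+1}\big)$, with $h=(h^z,h^n)$ the field in \eqref{eq:cont-ode} and $U_{i+1}$ a centered remainder. The only algebraic check is that the conditional drift equals $h$: with $\mathcal F_i$ the natural filtration of the process, $\E[\xi_i]=m$, $\E[\mathbf 1_{\{n\text{-}wakes\}}\mid\mathcal F_i]=\bar N_i/\bar Z_i=:\eta_i$, and independence of the comment indicator from the offspring count (so $\E[\mathbf 1_n\xi_i\mid n\text{-wakes}]=\alpha_{nn}(\sigma)m$ and $\E[\mathbf 1_n\xi_i\mid p\text{-wakes}]=\alpha_{pn}(\sigma)m$), the drift of $\bar Z_{i+1}-\bar Z_i$ is $\gamma_i(m-1-\bar Z_i)\mathbf 1_{\{\bar Z_i>0\}}$ and that of $\bar N_{i+1}-\bar N_i$ is $\gamma_i\big[\eta_i(\alpha_{nn}m-1)+(1-\eta_i)\alpha_{pn}m-\bar N_i\big]\mathbf 1_{\{\bar Z_i>0\}}$, matching $h^z,h^n$; consequently $U_{i+1}$ is a martingale-difference sequence.

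I would then verify the standing hypotheses of the method. The gains $\gamma_i=1/(i+1)$ obey $\sum_i\gamma_i=\infty$ and $\sum_i\gamma_i^2<\infty$. The assumption $\E[M^2]<\infty$ yields $\E[\xi_i^2]<\infty$ (since $\Var(\xi)=q(1-q)m_M+q^2\Var(M)$), so the conditional second moments $\E[\lVert U_{i+1}\rVert^2\mid\mathcal F_i]$ are uniformly bounded; together with $\sum_i\gamma_i^2<\infty$ this makes the accumulated noise a convergent $L^2$-martingale. Next I would establish almost-sure boundedness of the iterates: the $\bar Z$-recursion is autonomous and mean-reverting to $m-1$, pinning $\{\bar Z_i\}$ in a bounded set, after which the drift $h^n$ confines $\{\bar N_i\}$ as well. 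With boundedness and the noise control in hand, the linearly interpolated process is an asymptotic pseudo-trajectory of the flow of \eqref{eq:cont-ode}, so by the limit-set theorem its limit set is almost surely a compact, connected, internally chain-transitive (ICT) set of that flow.

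It remains to identify the ICT sets of \eqref{eq:cont-ode}. I exploit the skew-product structure: $\dot z=(m-1-z)\mathbf 1_{\{z>0\}}$ is decoupled from $n$, so in the supercritical regime $m>1$ every interior trajectory has $z(t)\to m-1=\bar Z^*$ exponentially and thereafter stays in $\{z>0\}$, where the field is smooth and the singularity of $\eta=n/z$ is inactive. Imposing $\dot n=0$ at $z=\bar Z^*$ and solving for $\eta=n/z$ gives precisely $\eta^*(\sigma)=\alpha_{pn}/(1-\alpha_{nn}+\alpha_{pn})$, hence $\bar N^*=\eta^*\bar Z^*$; moreover the coefficient of $n$ in $h^n$ at $z=\bar Z^*$ equals $m(\alpha_{nn}-\alpha_{pn}-1)/(m-1)<0$, so the limiting (asymptotically autonomous) $n$-dynamics drive $n(t)\to\bar N^*$. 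Thus $(\bar Z^*,\bar N^*)$ is globally asymptotically stable on the survival region and is the unique ICT set there, which upgrades the ICT conclusion to almost-sure convergence of $(\bar Z_i,\bar N_i)$ to $(\bar Z^*,\bar N^*)$. I expect the chief obstacle to be the boundary analysis: the indicator $\mathbf 1_{\{z>0\}}$ and the division by $z$ render $h$ non-Lipschitz at $z=0$, and $z=0$ is itself an (extinction) equilibrium, so I must argue that on the survival event the supercritical drift keeps both the iterates and the flow uniformly away from the singular set, ensuring the pseudo-trajectory machinery and the Lyapunov/cascade argument apply on a region where $h$ is well behaved.
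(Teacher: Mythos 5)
Your proposal is correct and takes essentially the same route as the paper, whose entire proof consists of noting that the discrete trajectory of \eqref{eq:discrete-ode} is an asymptotic pseudo-trajectory of the flow of \eqref{eq:cont-ode} (citing \cite{SA_for_BP}) and deferring the remaining details to \cite{BPBP_tagging}, with the equilibrium $(\bar{Z}^*,\bar{N}^*)=(m-1,\eta^*(\sigma)(m-1))$ read off from $\dot z=\dot n=0$ exactly as you do. Your writeup simply fills in the steps the paper outsources---the drift/martingale-difference decomposition, the second-moment noise bound from $\E[M^2]<\infty$, boundedness of the iterates, identification of the internally chain-transitive set, and the restriction to the survival event where the indicator $\mathbf{1}_{\{z>0\}}$ and the division by $z$ are harmless---and is, if anything, more careful than the paper's own treatment on that last extinction-boundary point.
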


The proof for the above proposition follows \cite{BPBP_tagging}. Note that $\eta^*(\sigma)$ and $\eta^*(a)$ can be used interchangeably because the receiver decides an action $a$ based on the posterior belief $\mu_\sigma$ with respect to the tag $\sigma$. Since the receiver's comment only depends on the belief, we can characterize the limiting trend under tag $\sigma$ by the following statement.
\begin{corollary}
   As $\alpha_{pn}(\sigma) = \alpha_{nn}(\sigma) = 1-\E_{\mu_\sigma}[\omega]$, then the proportion of negative comments $\eta^*(\sigma)=\eta^*(a(\mu_\sigma))=\alpha_{pn}(\sigma)=1-\E_{\mu_\sigma}[\omega]$.
\label{remark}
\end{corollary}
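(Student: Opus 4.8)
The plan is to obtain the corollary as a direct specialization of the stationary formula in Proposition~\ref{prop:sa}. That proposition expresses the limiting negative-comment proportion as $\eta^*(\sigma)=\frac{\alpha_{pn}(\sigma)}{1-\alpha_{nn}(\sigma)+\alpha_{pn}(\sigma)}$ for general negative-to-negative and positive-to-negative transition factors, so the corollary amounts to evaluating this expression in the regime where the two factors coincide. I would therefore treat the branching dynamics as already resolved by Proposition~\ref{prop:sa} and focus solely on the algebraic consequence of the equality $\alpha_{nn}(\sigma)=\alpha_{pn}(\sigma)$.

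First I would recall the structural reason, established in Section~\ref{sec:branching}, that the two transition factors are equal. Because the receiver's comment depends only on the posterior belief $\mu_\sigma$ through the best response $a^*(\mu_\sigma)=\E_{\mu_\sigma}[\omega]$ in \eqref{eq:a-star}, and not on whether the received comment was positive or negative, the probability of leaving a negative comment is identical in both cases: $\alpha_{nn}(\sigma)=\alpha_{pn}(\sigma)=1-a^*(\mu_\sigma)=1-\E_{\mu_\sigma}[\omega]$. This is the only nontrivial input, and it reflects belief-driven behavior rather than the dynamics of the process.

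Then I would substitute this common value into the stationary formula. Writing $\alpha:=1-\E_{\mu_\sigma}[\omega]$ for the shared factor, the denominator becomes $1-\alpha_{nn}(\sigma)+\alpha_{pn}(\sigma)=1-\alpha+\alpha=1$, so the two factors cancel and the fraction collapses to its numerator, giving $\eta^*(\sigma)=\alpha=1-\E_{\mu_\sigma}[\omega]$. The identification $\eta^*(\sigma)=\eta^*(a(\mu_\sigma))$ then follows from the interchangeability noted after Proposition~\ref{prop:sa}: since the action $a^*(\mu_\sigma)$ is a deterministic function of the posterior, indexing the stationary trend by the tag $\sigma$ or by the induced action is equivalent.

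I do not anticipate a genuine obstacle here, as the result reduces to a one-line cancellation once the equality of transition rates is in hand. The main care is simply to present the equality $\alpha_{nn}(\sigma)=\alpha_{pn}(\sigma)$ as a justified consequence of the receiver's best response rather than an assumption, so that the conceptual content—that a tag influences the limiting trend purely through the posterior mean it induces—is made explicit.
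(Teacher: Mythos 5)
Your proposal is correct and matches the paper's (implicit) argument exactly: the paper also treats the corollary as an immediate substitution of $\alpha_{nn}(\sigma)=\alpha_{pn}(\sigma)=1-\E_{\mu_\sigma}[\omega]$ (established in Section~\ref{sec:branching} from the best response \eqref{eq:a-star}) into the stationary formula $\eta^*(\sigma)=\frac{\alpha_{pn}(\sigma)}{1-\alpha_{nn}(\sigma)+\alpha_{pn}(\sigma)}$ of Proposition~\ref{prop:sa}, whereupon the denominator collapses to $1$. Your handling of the identification $\eta^*(\sigma)=\eta^*(a(\mu_\sigma))$ via the determinism of $a^*(\mu_\sigma)$ likewise mirrors the remark following that proposition.
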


\subsection{Optimality Conditions under Stable Branching}
Given the receiver's best response $a^*(\mu_\sigma)$ and the stabilized branching process result, we can now simplify the agent's problem, as the trend $\eta^*(\sigma)$ admits a simple formula. Since $\eta^*(a)=1-\E_{\mu}[\omega]$ from \Cref{remark}, we notice that $v_A(\mu)=r_A(a^*(\mu))=1-\eta^*(a)=\E_{\mu}[\omega]=\mu(1)$, which is linear in $\mu(1)$. In the binary-state case, the belief $\mu_\sigma$ is uniquely determined by its second entry $\mu(1)$. Hence, the following discussion will treat $\mu_\sigma$ as a scalar. The same treatment also applies to the prior $\theta$. The agent's optimality conditions under the signaling in \eqref{eq:lambda-argmax} can be rewritten as 
\begin{align*}
    \max_{\lambda\in [0,1]}\langle \theta(\lambda), \Vec{v}_A^d(\pi)  \rangle -c(\lambda).
\end{align*}
Given the linearity of the first term and the convexity of the second term, the problem is an unconstrained convex optimization problem. Therefore, taking the first-order derivative of the objective function leads to the following first-order condition for optimality \cite{boyd2004convex}: 
\begin{align}
    \langle \theta_1-\theta_0, \Vec{v}_A^d(\pi) \rangle =\nabla c(\lambda), \label{eq:agent-opt}
\end{align} As later shown in the ensuing section, the agent's marginal cost $\nabla c$ plays a significant part in the feasibility of the sender's information structures. 

Since $\eta^*(a)=1-\E_{\mu}[\omega]$ and then $1-\eta^*(a)=\E_{\mu}[\omega]$, the sender's expected utility under the belief $\mu$ is $v_{S}(\mu)=\E^2_{\mu}[\omega]$, which is convex in $\mu$ and non-negative. In the binary-state case, $v_S(\mu)=\mu^2$. 
Hence, the sender's problem is given by 
\begin{equation}
    \begin{aligned}
    \max_{\pi, \lambda} & \langle \theta(\lambda), \Vec{v}_S^d(\pi) \rangle \\
    \text{s.t. } & \langle \theta_1-\theta_0, \Vec{v}_A^d(\pi) \rangle =\nabla c(\lambda), \\
    &\mu_\sigma=\frac{d^\tp \pi(\sigma|\cdot)\odot   \theta(\cdot|\lambda)}{\langle d^\tp\pi(\sigma|\cdot) ,  \theta(\cdot|\lambda) \rangle}.
\end{aligned}\label{eq:sender-old}
\end{equation}
Note that the agent's decision variable $\lambda$ also appears in the maximization, as we assume that the tie breaks in favor of the sender should there exist multiple effort level $\lambda$ satisfying the first constraint in \eqref{eq:sender-old}. It should be noted that both the objective and the first constraint are linear in $\pi$ and admit a linear programming formulation \cite{tao22bp}. However, the challenge lies in the second constraint, which is the consistency requirement in \eqref{eq:consistency_d} and involves division operation, leading to a highly nonlinear programming problem. To simplify our analysis, the proposition in the following section \ref{sec:plausibility} uses Bayesian Plausibility to transform the sender's problem into the posterior belief space.

\subsection{Finite-State Persuasion Game}
Before concluding this section, we briefly touch upon the generic persuasion model with finite state, signal, and action space. The assumption of finite discrete spaces is made for the purpose of demonstrating the complexity in computing the perfect Bayesian equilibrium. In contrast, the binary case admits an elegant Lagrangian approach to characterize the optimal solution without solving the optimization problem as presented in the ensuing section. The developed Lagrangian approach also lends itself to the generic convex utility function (\Cref{prop:optimal-is}), and the binary case considered in this work provides a simple and illustrative example. 

To facilitate the discussion, we first ``vectorize'' the key components in the persuasion game model as in the binary case. Let the state, signal, and receiver action space be $\Omega=\{\omega_i\}_{i\in [P]}$, $\Sigma=\{\sigma_i\}_{i\in [Q]}$, and $\mathcal{A}=\{a_i\}_{i\in [K]}$, respectively, where $[P]\triangleq\{1,2,\ldots, P\}$. To further simplify the exposition, we fix the agent's action $\lambda$ and represent the state distribution as a diagonal matrix $\Theta\triangleq \operatorname{diag}\{\theta_1, \theta_2, \ldots, \theta_P\}$, where $\theta_i\triangleq \theta(\omega_i|\lambda)$. The misdetection can also be expressed as a stochastic matrix: $D_{ij}\triangleq d(w'_j|w_i)$, and $D\mathds{1}=\mathds{1}$. Similarly, the sender's signaling $\pi$ takes the following stochastic matrix form: $\Pi_{mn}\triangleq \pi(\sigma_n|w_m)$. 

Upon receiving the signal $\sigma_n\in \Sigma$, the receiver derives the Bayesian posterior belief following the consistency in \eqref{eq:consistency_d}. Let $\mu_{mn}$ be the receiver's belief of state $\omega_m$ after observing $\sigma_n$ (i.e., the $m$-th entry of $\mu_n$ in \eqref{eq:consistency_d}). Then, we arrive at
\begin{equation}
\label{eq:bayes-post}
    \mu_{mn}=\frac{\theta_m \sum_{m'\in [P] }D_{mm'}\Pi_{m'n}}{\sum_{m\in [P]}\theta_m\sum_{m'\in [P]} D_{mm'}\Pi_{m'n}}.
\end{equation}
Define the belief system $\{\mu_n\}_{n\in [Q]}$ as $U\triangleq[\mu_{mn}]\in \R^{P\times Q}$, and translating \eqref{eq:bayes-post} into matrix presentation, one obtains
\begin{equation}
\label{eq:consistency-matrix}
    U=\Theta D\Pi \oslash (\mathds{1}\mathds{1}^\tp \Theta D\Pi),
\end{equation}
where $\oslash$ denotes the Hadamard (entry-wise) division. Based on the posterior belief $\mu_n$, the receiver decides an action, which we model as a non-negativbe stochastic matrix $A=[A_{nk}]\in \R_{\geq 0}^{Q\times K}$, where $A_{nk}$ denotes the probability of choosing $a_k$ upon receiving $\sigma_n$ (inducing belief $\mu_n$). 

We now utilize the matrix inequality to characterize the receiver's best response under the induced belief. Let $S=[S_{km}]\in \R^{K\times P}$ and $R=[R_{km}]\in\R^{K\times P}$ be the matrix representations of the sender's and receiver's utility function, respectively, where their $(k,m)$-entry denotes the utilities under state $\omega_m$ and action $a_k$. Suppose the receiver's response policy $A$ is the best response, then for any belief $\mu_n$, $n\in [Q]$, we have the following inequality hold for any other stochastic matrix $A'$:
\begin{equation*}
    \sum_{m\in [P]} \mu_{mn} \sum_{k\in [K]} A_{nk}R_{km}\geq \sum_{m\in [P]} \mu_{mn} \sum_{k\in [K]} A'_{nk}R_{km},
\end{equation*}
which suggests that the policy $A$ brings up higher expected utility under any belief. When translated into compact matrix representations, the above inequalities (one for each $n\in [Q]$)  lead to the matrix inequality in \eqref{eq:br-matrix}. With a light abuse of notation, we denote by $\operatorname{diag}(W)$ the vector composed of diagonal entries of matrix $W$ and by $\succeq$ the entry-wise $\geq$ relation between two vectors.
\begin{equation}
    \label{eq:br-matrix}
    \operatorname{diag}(ARU) \succeq \operatorname{diag}(A'RU), \forall A'\in \R_{\geq 0}^{Q\times K}, A'\mathds{1}=\mathds{1}.
\end{equation}

Employing the same argument, we derive the sender's optimal signaling. Suppose the optimal solution to \eqref{eq:sender-max} $\Pi$ satisfies the following inequality
\begin{align*}
    &\sum_{m\in [P]}\theta_m \sum_{n\in [Q], m'\in [P]}D_{mm'}\Pi_{m'n}A_{nk}S_{km}\\
    &\geq \sum_{m\in [P]}\theta_m \sum_{n\in [Q], m'\in [P]}D_{mm'}\Pi'_{m'n}A_{nk}S_{km},
\end{align*}
for any other stochastic matrix $\Pi'$. Similarly, the above inequalities admit a compact matrix representation. Denote by $\operatorname{Tr}(\cdot)$ the trace operator, we arrive at
\begin{equation}
    \label{eq:sender-max-matrix}
    \operatorname{Tr}(\Theta D \Pi A S)\geq \operatorname{Tr}(\Theta D \Pi' A S), \forall \Pi'\in \R_{\geq 0}^{P\times Q}, \Pi'\mathds{1}=\mathds{1}.
\end{equation}
Finally, summarizing \eqref{eq:consistency-matrix}, \eqref{eq:br-matrix}, and \eqref{eq:sender-max-matrix}, we can define the perfect Bayesian equilibrium in matrix form as in \Cref{def:pbe-matrix}.
\begin{definition}[Perfect Bayesian Equilibrium in Matrix]
\label{def:pbe-matrix}
    For a finite persuasion game, a triple of matrices $(\Pi, A, U)$ is a perfect Bayesian equilibrium if it satisfies 
    \begin{equation}
    \label{eq:pbe-matrix}
        \begin{aligned}
            & \operatorname{Tr}(\Theta D \Pi A S)\geq \operatorname{Tr}(\Theta D \Pi' A S),\\
            & \forall \Pi'\in \R_{\geq 0}^{P\times Q}, \Pi'\mathds{1}=\mathds{1}, \forall \Pi'\in \R_{\geq 0}^{P\times Q}, \Pi'\mathds{1}=\mathds{1},\\
            & \operatorname{diag}(ARU) \succeq \operatorname{diag}(A'RU), \forall A'\in \R_{\geq 0}^{Q\times K}, A'\mathds{1}=\mathds{1},\\
            & U=\Theta D\Pi \oslash (\mathds{1}\mathds{1}^\tp \Theta D\Pi).
        \end{aligned}
    \end{equation}
\end{definition}

We now comment on the computation complexity of solving the matrix inequality in \eqref{eq:pbe-matrix}. Prior works established that solving for the equilibrium signaling $\Pi$ is NP-hard \cite{rubinstein15hardness,bhaskar16hardness,dughmi19hardness,tao22bp}. Furthermore, \cite{tao23pot} developed a two-stage bilinear programming method for equilibrium computation. However, the mathematical programming method can only handle a subset of equilibrium: non-degenerate belief-dominant perfect Bayesian equilibrium. The key message of our work is that the Lagrangian conveys sufficient information to determine the equilibrium solution without exact computation if the utility function is convex with respect to the belief, as established in \Cref{prop:optimal-is}.
\section{Perfect Bayesian Equilibrium Characterization: A Lagrangian Approach}

\subsection{Bayesian Plausibility} \label{sec:plausibility}

Bayesian plausibility \cite{kamenica11BP} serves as a crucial sanity check for any information structure: all the posterior beliefs generated by the observed signals must align with the prior distribution within that structure. The following proposition reformulates the sender's problem by shifting the focus from a tagging policy $\pi$ to a distribution over posteriors $\tau^d\in \Delta(\Delta(\Omega))$ as the decision variable.


\begin{proposition}[Bayesian Plausibility]
    Given an effort $\lambda$, there exists a signaling  $\pi$ satisfying the conditions in problem \eqref{eq:sender-old} if and only if there exists a distribution over posteriors $\tau^d \in \Delta(\Delta(\Omega))$ such that 
    \begin{align*}
    &\E_{\tau^d}[\mu] = \theta(\lambda),\\
    &\E_{\tau^d} \left[\E_{\mu}[\nabla \log \theta(\lambda) ]v_A(\mu)\right]=\nabla c(\lambda).
    \end{align*}
\end{proposition}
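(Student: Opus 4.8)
The plan is to read $\tau^d$ as the pushforward of the sender's signal distribution onto the posterior simplex, and to show that the two displayed identities are precisely the consistency constraint and the agent's first-order condition of problem \eqref{eq:sender-old} written in belief coordinates. Writing $g_\sigma := d^\tp\pi(\sigma|\cdot)$ for the $d$-distorted signal likelihood, the consistency map \eqref{eq:consistency_d} reads $\mu_\sigma = \theta(\lambda)\odot g_\sigma / p_\sigma$ with marginal $p_\sigma := \langle g_\sigma, \theta(\lambda)\rangle$, so that $\theta(\lambda)\odot g_\sigma = p_\sigma\mu_\sigma$. I would then treat the two directions separately.

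For the forward direction, given a feasible $\pi$ I set $\tau^d(\mu_\sigma) = p_\sigma$. The Bayes-plausibility identity $\E_{\tau^d}[\mu] = \theta(\lambda)$ follows from $\sum_\sigma g_\sigma = d^\tp\sum_\sigma\pi(\sigma|\cdot) = d^\tp\1 = \1$ (each column of $d$ sums to one), since $\sum_\sigma p_\sigma\mu_\sigma = \theta(\lambda)\odot\sum_\sigma g_\sigma = \theta(\lambda)$. For the second identity I start from the agent's first-order condition $\sum_\omega\nabla_\lambda\theta(\omega|\lambda)\,\bar{v}_A^d(\omega|\pi) = \nabla c(\lambda)$ (the general-state form of \eqref{eq:agent-opt}), substitute $\nabla_\lambda\theta(\omega|\lambda) = \theta(\omega|\lambda)\nabla_\lambda\log\theta(\omega|\lambda)$, and use $\theta(\omega|\lambda)g_\sigma(\omega) = p_\sigma\mu_\sigma(\omega)$ to collapse the state sum into an expectation under $\mu_\sigma$. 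This yields $\sum_\sigma p_\sigma v_A(\mu_\sigma)\,\E_{\mu_\sigma}[\nabla\log\theta(\lambda)] = \E_{\tau^d}\!\big[\E_\mu[\nabla\log\theta(\lambda)]\,v_A(\mu)\big] = \nabla c(\lambda)$, as claimed.

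For the converse I would reconstruct a signaling by a splitting argument: treat each posterior $\mu$ in the support of $\tau^d$ as its own signal, set the distorted likelihood $g_\mu := \tau^d(\mu)\,\mu\oslash\theta(\lambda)$, and recover $\pi(\mu|\cdot) = (d^\tp)^{-1}g_\mu$, which is well defined because $\det d = 1-\varepsilon_0-\varepsilon_1 > 0$ under \Cref{assump:eps}. The column-normalization $\sum_\mu\pi(\mu|\cdot) = \1$ holds since $\sum_\mu g_\mu = \E_{\tau^d}[\mu]\oslash\theta(\lambda) = \1$ by the mean identity, together with $(d^\tp)^{-1}\1 = \1$. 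By construction the posteriors induced by this $\pi$ are exactly $\supp(\tau^d)$ with weights $p_\mu = \tau^d(\mu)$, so the consistency constraint of \eqref{eq:sender-old} is met, and running the algebra of the forward step in reverse turns the second identity back into the agent's first-order condition.

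The step I expect to be the main obstacle is feasibility in the converse: the reconstructed $\pi(\mu|\cdot) = (d^\tp)^{-1}g_\mu$ need not be entrywise non-negative, because $(d^\tp)^{-1}$ acquires negative off-diagonal entries whenever $\varepsilon_0,\varepsilon_1 > 0$. Equivalently, misdetection confines the realizable posteriors to the image of the Bayesian inversion of $d$ --- a strict sub-simplex of $\Delta(\Omega)$, which in the binary case is an interval strictly inside $[0,1]$, so that fully revealing beliefs $\mu\in\{0,1\}$ are unreachable. I would resolve this by reading the superscript in $\tau^d$ as the requirement that $\tau^d$ be supported on this $d$-achievable set: for every such $\mu$ the inversion preimage $(d^\tp)^{-1}(\mu\oslash\theta(\lambda))$ lies in the simplex by definition, which is exactly the non-negative $\pi(\mu|\cdot)$ required. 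Since \Cref{assump:eps} makes $d$ invertible, this preimage is unique, so the lift is well posed and non-negativity holds precisely on $\supp(\tau^d)$, completing the converse.
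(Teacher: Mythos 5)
Your proof is correct and follows essentially the same route as the paper's: define $\tau^d$ as the pushforward $\tau^d(\mu_\sigma)=\langle d^\tp\pi(\sigma|\cdot),\theta(\lambda)\rangle$, invert $d^\tp$ (nonsingular since $\det d = 1-\varepsilon_0-\varepsilon_1\neq 0$ under Assumption \ref{assump:eps}) to reconstruct $\pi(\sigma|\cdot)=\tau^d(\mu_\sigma)(d^\tp)^{-1}(\mu_\sigma\oslash\theta(\lambda))$, use the left stochasticity $d^\tp\1=\1$ for the mean identity, and use the substitution $\nabla_\lambda\theta(\omega|\lambda)=\theta(\omega|\lambda)\nabla_\lambda\log\theta(\omega|\lambda)$ to collapse the agent's first-order condition into the stated expectation. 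Your explicit treatment of entrywise non-negativity of the reconstructed $\pi$ in the converse --- restricting $\supp(\tau^d)$ to the $d$-achievable beliefs $[\underline{\mu},\overline{\mu}]$, since $(d^\tp)^{-1}$ has negative off-diagonal entries --- is a point the paper's proof leaves implicit (deferring it to the separate feasible-posterior-belief discussion in Section \ref{sec:feasible_mu}), so your converse is, if anything, slightly more complete than the original.
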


\begin{proof}
    We first need to prove the equivalence between the signaling mechanism $\pi$ and the distribution $\tau^d$. Without loss of generality, assume that for each signal $\sigma\in\Sigma$, the receiver has a distinct posterior belief $\mu_\sigma$. Starting from $\pi$, and fixing $\lambda$, the probability of generating $\mu_\sigma$ is 
    \begin{equation*}
        \tau^d(\mu_\sigma)= \sum_{\omega, \omega'}\pi(\sigma|\omega')d(\omega'|\omega)\theta(\omega|\lambda)=\langle d^\tp\pi(\sigma|\cdot), \theta(\lambda) \rangle.
    \end{equation*} From the posterior belief in \eqref{eq:consistency_d} and the definition of $\tau^d$, we have
    \begin{equation*}
        \pi(\sigma|\cdot) = \tau^d(\mu_\sigma)(d^\tp)^{-1}(\mu_\sigma\oslash \theta(\cdot|\lambda)),
    \end{equation*}
    where we assume that $d$ is nonsingular. The nonsingularity is easy to satisfy, as the determinant $\operatorname{det}(d)=1-\varepsilon_0-\varepsilon_1$ is nonzero according to Assumption \ref{assump:eps}. 
    Then, we have  
    \begin{align*}
       &  d^\tp \pi(\sigma|\cdot) = \tau^d(\mu_\sigma)(\mu_\sigma\oslash \theta(\cdot|\lambda))\\
       \Leftrightarrow\quad  & \sum_{\sigma} d^\tp \pi(\sigma|\cdot) \odot \theta(\cdot|\lambda) =\sum_\sigma \tau^d(\mu_\sigma) \mu_\sigma
    \end{align*}
    Using the distributivity of matrix multiplication, 
    the left-hand side is indeed 
    $$d^\tp(\sum_\sigma \pi(\sigma|\cdot))\odot \theta(\cdot|\lambda)=d^\tp {1}\odot \theta(\cdot|\lambda)= \theta(\cdot|\lambda),$$
    where the last equality follows the left stochasticity of $d$. Therefore,  $\E_{\tau^d}[\mu]=\theta(\lambda)$, which proves the first equality in the proposition. Note that the posterior distribution $\tau^d$ associated with $\pi$ is called the Bayesian-plausible distribution in the literature \cite{kamenica11BP}, and that the first equality shows Bayesian plausibility holds with respect to the original prior $\theta(\lambda)$ instead of the distorted one. 

    To recover the agent's optimality condition (also called incentive-compatibility constraint), consider the constraint:
    \begin{align*}
        & \langle \theta_1- \theta_0,   \Vec{v}^d_A(\pi) \rangle \\
        &= \sum_{\omega}\left(\sum_{\sigma}\sum_{\omega'}d(\omega'|\omega)\pi(\sigma|\omega')v_A(\mu_\sigma)\right)(\theta_1(\omega)-\theta_0(\omega))\\
        & =\sum_{\omega}\left( \frac{\sum_{\sigma}\tau^d(\mu_\sigma)\mu_\sigma(\omega)}{\theta(\omega|\lambda)}v_A(\mu_\sigma)\right)(\theta_1(\omega)-\theta_0(\omega))\\
        & = \E_{\tau^d}[\E_{\mu}[\nabla_\lambda \log \theta(\omega|\lambda)]v_A(\mu)] = \nabla c(\lambda),
    \end{align*} which proves the second equality in the proposition.
\end{proof}

Hence, by letting $f(\mu)=\E_{\mu}[\nabla_\lambda \log \theta(\omega|\lambda)] v_A(\mu)-\nabla c(\lambda)$, the sender's problem can be rewritten as 
\begin{align}
\max_{\tau^d\in\Delta(\Delta(\Omega)), \lambda} & \E_{\tau^d}[v_S(\mu)],\label{eq;sender-max}\\
 \text{s.t. } & \E_{\tau^d}[\mu]=\theta(\lambda),\label{eq:bp}\\
& \E_{\tau^d} [f(\mu)]=0,\label{eq: ic}
\end{align}
where \eqref{eq:bp}, referred to as the Bayesian plausibility constraint (BP), corresponds to the consistency in \eqref{eq:consistency_d}; \eqref{eq: ic}, referred to as the incentive-compatibility constraint (IC), rephrases the agent's optimality condition in \eqref{eq:agent-opt}.

\subsection{Feasible Posterior Beliefs}\label{sec:feasible_mu}
It is worth noting that due to the sender's misperception $d$ with false alarms $\varepsilon_0$ and $\varepsilon_1$, the posterior beliefs $\mu$  can not span the entire $[0, 1]$. To see this,
consider the binary-state case,
\begin{align*}
    d^\tp \pi(\sigma|\cdot)&=\begin{bmatrix}
    1-\varepsilon_0 & \varepsilon_0\\
    \varepsilon_1 & 1-\varepsilon_1
\end{bmatrix}\begin{bmatrix}
    \pi(\sigma|0) \\
    \pi(\sigma|1)
\end{bmatrix}\\
&=\begin{bmatrix}
    (1-\varepsilon_0)\pi(\sigma|0)+ \varepsilon_0 \pi(\sigma|1)\\
    \varepsilon_1 \pi(\sigma|0)+(1-\varepsilon_1)\pi(\sigma|1)
\end{bmatrix},
\end{align*}
\begin{align*}
    \mu_\sigma &= \frac{d^\tp \pi(\sigma|\cdot) \odot \theta(\cdot|\lambda)}{\tau^d(\mu_\sigma)}\\
    &=\begin{bmatrix}
    \frac{(1-\lambda)[(1-\varepsilon_0)\pi(\sigma|0)+ \varepsilon_0 \pi(\sigma|1)]}{(1-\lambda)[(1-\varepsilon_0)\pi(\sigma|0)+ \varepsilon_0 \pi(\sigma|1)]+\lambda[\varepsilon_1 \pi(\sigma|0)+(1-\varepsilon_1)\pi(\sigma|1)]}\\
    \frac{\lambda[\varepsilon_1 \pi(\sigma|0)+(1-\varepsilon_1)\pi(\sigma|1)]}{(1-\lambda)[(1-\varepsilon_0)\pi(\sigma|0)+ \varepsilon_0 \pi(\sigma|1)]+\lambda[\varepsilon_1 \pi(\sigma|0)+(1-\varepsilon_1)\pi(\sigma|1)]}
\end{bmatrix}.\label{eq:mu_s_d}
\end{align*} In this case, $\mu=1$ only when $\lambda=1$. Hence, for given values of $\lambda, \varepsilon_0$, and $ \varepsilon_1$, $\mu$ can not span the entire range of $[0, 1]$.

As proved by \cite{WU2023105763}, more information signaling leads to more dispersed beliefs. We identify the feasible space for posterior beliefs through ``fully informative'' signaling, which truthfully and deterministically reveals the content authenticity. Let $\overline{\pi}$ represent the fully informative tagging, where $\overline{\pi}(0|0)=\overline{\pi}(1|1)=1, \overline{\pi}(1|0)=\overline{\pi}(0|1)=0$. In this scenario, the receiver, upon receiving the tag $\sigma$, is certain about the sender's perceived authenticity: the post is either fake $0$ or authentic $1$. 
When the received tag $\sigma=0$, denote $\mu_{\sigma=0} = [1-\underline{\mu}, \ \underline{\mu}]^\tp$,
\begin{align}
    \mu_{\sigma=0}=\begin{bmatrix}
    1-\underline{\mu}\\
    \underline{\mu}
\end{bmatrix}=\begin{bmatrix}
    \frac{(1-\lambda)(1-\varepsilon_0)}{(1-\lambda)(1-\varepsilon_0)+\lambda\varepsilon_1} \\
    \frac{\lambda \varepsilon_1}{(1-\lambda)(1-\varepsilon_0)+\lambda\varepsilon_1} 
\end{bmatrix}, 
\end{align}with $\overline{\tau}^d(\mu_{\sigma=0})=\overline{\tau}^d(\underline{\mu})=(1-\lambda)(1-\varepsilon_0)+\lambda\varepsilon_1$ under fully-informative tagging policy $\overline{\pi}$. While the received tag $\sigma=1$, denote $\mu_{\sigma=1} = [1-\overline{\mu} \ \overline{\mu}]^\tp$, then we have 
\begin{align}
    \mu_{\sigma=1}=\begin{bmatrix}
    1-\overline{\mu} \\
    \overline{\mu} 
\end{bmatrix}=\begin{bmatrix}
    \frac{(1-\lambda)\varepsilon_0}{(1-\lambda)\varepsilon_0+\lambda(1-\varepsilon_1)} \\
    \frac{\lambda(1-\varepsilon_1)}{(1-\lambda)\varepsilon_0+\lambda(1-\varepsilon_1)} 
\end{bmatrix}. 
\end{align} with $\overline{\tau}^d(\mu_{\sigma=1})=\overline{\tau}^d(\overline{\mu})= (1-\lambda)\varepsilon_0+\lambda(1-\varepsilon_1)$. Then, considering the receiver's belief $\mu$ resulting from an arbitrary tagging policy, we can observe that $$0 \leq \underline{\mu} \leq \mu \leq \overline{\mu} \leq 1.$$
By noticing this, we denote $\mu \in [\underline{\mu}, \overline{\mu}]$ to represent the feasible posterior belief spaces.

\begin{proposition}
    Under fully informative tagging in the binary-state case, $\underline{\mu}$ is convex and increasing in $\lambda$, while $\overline{\mu}$ is concave and increasing in $\lambda$.
\label{prop:increasing_mu}
\end{proposition}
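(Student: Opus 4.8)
The plan is to observe that both boundary beliefs are fractional-linear functions of $\lambda$ of a common shape, so that a single computation settles monotonicity and the correct curvature for each. Concretely, after collecting the $\lambda$-terms in each denominator, both $\underline{\mu}$ and $\overline{\mu}$ can be written as
\begin{equation*}
g(\lambda)=\frac{a\lambda}{b+(a-b)\lambda},\qquad \lambda\in[0,1],
\end{equation*}
for appropriate positive constants $a,b$. For $\underline{\mu}$ one reads off $a=\varepsilon_1$ and $b=1-\varepsilon_0$, so that $a-b=-(1-\varepsilon_0-\varepsilon_1)$; for $\overline{\mu}$ one reads off $a=1-\varepsilon_1$ and $b=\varepsilon_0$, so that $a-b=1-\varepsilon_0-\varepsilon_1$. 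This first reduction step is purely algebraic and independent of the assumptions.

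Next I would establish monotonicity by differentiating once. Writing $D(\lambda)=b+(a-b)\lambda$ for the common denominator, a short quotient-rule computation collapses, after the $(a-b)$ terms cancel in the numerator, to
\begin{equation*}
g'(\lambda)=\frac{ab}{D(\lambda)^2}.
\end{equation*}
Because $a,b>0$ in both instances --- here Assumption~\ref{assump:eps} guarantees $1-\varepsilon_0>0$ and $1-\varepsilon_1>0$ --- and $D^2>0$, we get $g'>0$, so that both $\underline{\mu}$ and $\overline{\mu}$ are strictly increasing in $\lambda$. This disposes of the monotonicity claim for both functions at once.

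For the curvature I would differentiate a second time, obtaining
\begin{equation*}
g''(\lambda)=\frac{-2ab\,(a-b)}{D(\lambda)^3},
\end{equation*}
so that the sign of $g''$ is the sign of $-(a-b)$ whenever $D>0$. Substituting the parameters, $\underline{\mu}$ has $a-b<0$, hence $g''>0$ and $\underline{\mu}$ is convex, while $\overline{\mu}$ has $a-b>0$, hence $g''<0$ and $\overline{\mu}$ is concave, exactly as claimed. The one place where care is needed, and the step I expect to be the main (if modest) obstacle, is justifying $D(\lambda)>0$ throughout $[0,1]$, since the sign conclusions rely on $D^3>0$. This follows because $D$ is affine with $D(0)=b>0$ and $D(1)=a>0$, so positivity at the endpoints forces positivity on the whole interval; it is precisely here that Assumption~\ref{assump:eps} (through $\varepsilon_0+\varepsilon_1<1$, equivalently $\det d\neq 0$) does the work, ensuring the denominators never vanish and the two belief branches stay well separated.
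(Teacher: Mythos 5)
Your proposal is correct and takes essentially the same route as the paper: Appendix~\ref{app:increasing_mu} likewise computes the first and second derivatives of $\underline{\mu}$ and $\overline{\mu}$ directly and reads off the signs from $\varepsilon_0+\varepsilon_1<1$, your only addition being the packaging of both fractions into the common form $a\lambda/\bigl(b+(a-b)\lambda\bigr)$ so the quotient-rule computation is performed once. One small caveat: Assumption~\ref{assump:eps} only gives $\varepsilon_0,\varepsilon_1\ge 0$, so $a=\varepsilon_1$ (for $\underline{\mu}$) and $b=\varepsilon_0$ (for $\overline{\mu}$) need not be strictly positive; in the degenerate cases $\varepsilon_1=0$ or $\varepsilon_0=0$ your strict monotonicity claim weakens to the non-strict one, which is all the proposition (and the paper's own proof, with its $\ge 0$ and $\le 0$ inequalities) asserts.
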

\begin{proof}
    The proof is provided in Appendix \ref{app:increasing_mu}.
\end{proof}

\subsection{The Lagrangian Characterization}

With Bayesian plausibility, the sender's problem becomes equality-constrained nonlinear programming, which naturally prompts one to consider the Lagrange multiplier method. In what follows, we present a PBE characterization through the lens of Lagrangian. The discussion begins with the feasible domain of the maximization in \eqref{eq;sender-max}.  
\begin{proposition}[Implementable Effort, Feasible Condition]
In the binary-state model, let $\bar{\lambda}$ be the value such that $\nabla c(\lambda)=(\theta_1-\theta_0)D(\overline{\mu}-\underline{\mu})$, where $D=\operatorname{det}(d)=1-\varepsilon_0-\varepsilon_1$. Then, $\lambda$ is feasible if and only if $\lambda\leq\bar{\lambda}$.
\label{prop:feasible}
\end{proposition}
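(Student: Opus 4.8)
The plan is to work entirely in the posterior-belief reformulation \eqref{eq;sender-max}–\eqref{eq: ic}: an effort $\lambda$ is feasible exactly when there is a Bayesian-plausible distribution $\tau^d$, supported on the admissible interval $[\underline{\mu},\overline{\mu}]$ of \Cref{sec:feasible_mu}, meeting both the mean constraint \eqref{eq:bp} and the incentive constraint \eqref{eq: ic}. First I would specialize the two constraints to the binary state. Using $v_A(\mu)=\mu$ (from \Cref{remark}) together with $\theta(0|\lambda)=1-\lambda$ and $\theta(1|\lambda)=\lambda$, a short computation gives $\E_\mu[\nabla_\lambda\log\theta(\omega|\lambda)]=(\mu-\lambda)/(\lambda(1-\lambda))$, so \eqref{eq: ic} becomes $\E_{\tau^d}[(\mu-\lambda)\mu]=\lambda(1-\lambda)\nabla c(\lambda)$. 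Since \eqref{eq:bp} forces $\E_{\tau^d}[\mu]=\lambda$, the left-hand side is precisely $\Var_{\tau^d}[\mu]$. Hence feasibility of $\lambda$ is equivalent to the existence of a distribution on $[\underline{\mu},\overline{\mu}]$ with mean $\lambda$ and variance exactly $\lambda(1-\lambda)\nabla c(\lambda)$.

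Second, I would bound the attainable variance. Because $\underline{\mu}\le\lambda\le\overline{\mu}$ (immediate from the closed forms, or from plausibility of the fully-informative split), the variance of any mean-$\lambda$ distribution on $[\underline{\mu},\overline{\mu}]$ sweeps the whole interval $[0,(\lambda-\underline{\mu})(\overline{\mu}-\lambda)]$: the lower end is the Dirac mass at $\lambda$, the upper end is the two-point distribution on the endpoints, namely the fully-informative tagging $\overline{\pi}$, and every intermediate value arises by mixing these two. This matches the dispersion intuition of \cite{WU2023105763}. Thus $\lambda$ is feasible iff $\lambda(1-\lambda)\nabla c(\lambda)\le(\lambda-\underline{\mu})(\overline{\mu}-\lambda)$. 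Using the closed forms I would simplify $\lambda-\underline{\mu}$ and $\overline{\mu}-\lambda$; each equals $\lambda(1-\lambda)D$ divided by the respective tag probability $\overline{\tau}^d(\underline{\mu})$, $\overline{\tau}^d(\overline{\mu})$, whose sum is one, which collapses the product to $(\lambda-\underline{\mu})(\overline{\mu}-\lambda)=\lambda(1-\lambda)D(\overline{\mu}-\underline{\mu})$. Dividing the feasibility inequality by $\lambda(1-\lambda)$ then yields the clean threshold form $\nabla c(\lambda)\le D(\overline{\mu}-\underline{\mu})$.

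Finally, I would convert this pointwise inequality into the stated threshold $\lambda\le\bar\lambda$. Set $\phi(\lambda)=\nabla c(\lambda)-D(\overline{\mu}-\underline{\mu})$. By \Cref{assump:cost}, $\nabla c$ is convex, and by \Cref{prop:increasing_mu}, $\overline{\mu}$ is concave while $\underline{\mu}$ is convex, so $\overline{\mu}-\underline{\mu}$ is concave and hence $\phi$ is convex. Moreover $\phi(0)=0$ (both terms vanish) and $\phi(1)=\nabla c(1)>1>0$. For a convex $\phi$ the sublevel set $\{\phi\le 0\}$ is an interval; since it contains $0$ but not $1$, its intersection with $[0,1]$ is $[0,\bar\lambda]$, where $\bar\lambda$ is the point with $\phi(\bar\lambda)=0$, i.e.\ $\nabla c(\bar\lambda)=D(\overline{\mu}-\underline{\mu})$. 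This gives exactly ``$\lambda$ feasible $\iff\lambda\le\bar\lambda$.'' I expect the main obstacle to be the second step: identifying that the maximal implementable variance is attained by the fully-informative policy and then performing the algebraic reduction of $(\lambda-\underline{\mu})(\overline{\mu}-\lambda)$ to $\lambda(1-\lambda)D(\overline{\mu}-\underline{\mu})$. The convexity/single-crossing argument in the third step is the clean part that turns the variance bound into the threshold, and it rests entirely on borrowing the concavity of $\overline{\mu}-\underline{\mu}$ from \Cref{prop:increasing_mu}.
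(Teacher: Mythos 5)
Your proposal is correct in substance but takes a genuinely different route from the paper's. The paper proves necessity by computing directly, for a binary signaling $\pi$, that $\bar{v}_A^d(1)-\bar{v}_A^d(0)=D(\pi(1|1)+\pi(0|0)-1)(\mu_1-\mu_0)\leq D(\overline{\mu}-\underline{\mu})$, and proves sufficiency by exhibiting an explicit three-point hybrid $\tau^h$ supported on $\{\underline{\mu},\lambda,\overline{\mu}\}$ and verifying the BP and IC constraints by hand. You instead recast IC as a moment condition: since $\E_{\mu}[\nabla_\lambda\log\theta(\omega|\lambda)]=(\mu-\lambda)/(\lambda(1-\lambda))$ and BP forces $\E_{\tau^d}[\mu]=\lambda$, feasibility of $\lambda\in(0,1)$ becomes the existence of a mean-$\lambda$ law on $[\underline{\mu},\overline{\mu}]$ with $\Var_{\tau^d}[\mu]=\lambda(1-\lambda)\nabla c(\lambda)$. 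The variance bound $\Var\leq(\lambda-\underline{\mu})(\overline{\mu}-\lambda)$ (Bhatia--Davis), attained by the endpoint two-point law—which, having mean $\lambda$, is exactly the fully-informative $\overline{\tau}^d$—combined with your algebraic identity then recovers the paper's threshold; I verified the identity, since $\lambda-\underline{\mu}=\lambda(1-\lambda)D/\overline{\tau}^d(\underline{\mu})$, $\overline{\mu}-\lambda=\lambda(1-\lambda)D/\overline{\tau}^d(\overline{\mu})$, and $\overline{\mu}-\underline{\mu}=\lambda(1-\lambda)D/\bigl(\overline{\tau}^d(\underline{\mu})\overline{\tau}^d(\overline{\mu})\bigr)$. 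Your ``mix with the Dirac at $\lambda$'' step is the paper's hybrid in disguise: $\tau^h=t\,\overline{\tau}^d+(1-t)\delta_\lambda$ with $t=\nabla c(\lambda)/(D\Delta\theta)$, so your interpolation argument replaces the paper's line-by-line verification of both constraints and explains \emph{why} the construction works (IC is the posterior variance; maximal implementable effort is pinned down by maximal posterior dispersion). You also correctly supplement the Bayesian-plausibility proposition with the support restriction $\mu\in[\underline{\mu},\overline{\mu}]$, which is what makes the inverse construction $\pi(\sigma|\cdot)=\tau^d(\mu_\sigma)(d^\tp)^{-1}(\mu_\sigma\oslash\theta)$ yield a nonnegative stochastic matrix.

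One step overreaches as written. You justify convexity of $\phi(\lambda)=\nabla c(\lambda)-D(\overline{\mu}-\underline{\mu})$ by claiming that \Cref{assump:cost} makes $\nabla c$ convex; the assumption only says $c$ is convex, i.e.\ $\nabla c$ is non-decreasing, which does not give convexity of $\phi$. Without it, single crossing of $\nabla c(\lambda)$ and the concave curve $D(\overline{\mu}(\lambda)-\underline{\mu}(\lambda))$ (which vanishes at both $\lambda=0$ and $\lambda=1$) can fail in principle: a non-decreasing $\nabla c$ that rises steeply and then plateaus can cross it more than once, in which case the feasible set would not be $[0,\bar{\lambda}]$. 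To be fair, the paper's own proof is equally quick at this point—it writes $\nabla c(\lambda)>\nabla c(\bar{\lambda})=(\theta_1-\theta_0)D(\overline{\mu}-\underline{\mu})$ for $\lambda>\bar{\lambda}$, silently treating the $\lambda$-dependent right-hand side as if frozen at $\bar{\lambda}$—so your convexity route is a legitimate repair of a shared gap, but you should state the extra hypothesis explicitly (e.g., $\nabla c$ convex, which holds for the quadratic cost $k\lambda^2$ used in \Cref{sec:numerical}) rather than attribute it to \Cref{assump:cost}. A minor related point: since $\phi(0)=0$, the equation defining $\bar{\lambda}$ also has the root $0$, so $\bar{\lambda}$ should be taken as the largest root, which your sublevel-interval argument implicitly does.
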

\begin{proof}
We begin with the necessity. In the binary-state case, the IC constraint reduces to 
    \begin{align*}
        (\theta_1-\theta_0)(\bar{v}_A^d(1)-\bar{v}_A^d(0))=\nabla c(\lambda),
    \end{align*}
    where $\Bar{v}_A^d(\omega|\pi):=\sum_{\sigma}\sum_{\omega'}d(\omega'|\omega)\pi(\sigma|\omega')v_A(\mu_\sigma)$. Then, let $D=\operatorname{det}(d)=1-\varepsilon_0-\varepsilon_1$ and note that $v_A(\mu)=\mu \in [\underline{\mu},\overline{\mu}]$, 
    \begin{align*}
        & \Bar{v}_A^d(1) - \Bar{v}_A^d(0)\\
        &= (\varepsilon_1-(1-\varepsilon_0))\pi(0|0)\mu_0 + (1-\varepsilon_1-\varepsilon_0) \pi(0|1)\mu_0\\
        & \quad + (\varepsilon_1-(1-\varepsilon_0))\pi(1|0)\mu_1 + (1-\varepsilon_1-\varepsilon_0) \pi(1|1)\mu_1\\
        &= D(\pi(1|1)+\pi(0|0)-1)(\mu_1-\mu_0). 
    \end{align*} Since $(\pi(1|1)+\pi(0|0)-1) \leq 1$, $\Bar{v}_A^d(1) - \Bar{v}_A^d(0)$ never exceeds $D(\overline{\mu}-\underline{\mu})<1$.
    Hence, $(\theta_1-\theta_0)D(\overline{\mu}-\underline{\mu}) \geq \nabla c(\lambda)$. As $c(\cdot)$ is strictly increasing, $\nabla c(\lambda)> \nabla c(\bar{\lambda})=(\theta_1-\theta_0)D(\overline{\mu}-\underline{\mu})$, for $\lambda>\bar{\lambda}$, which means $\lambda$ is not IC.

    For sufficiency, consider $\lambda\in (0, \bar{\lambda}]$, and $\theta(\lambda)=(1-\lambda, \lambda)$. We construct a Bayesian-plausible hybrid $\tau^h$ as follows. $\operatorname{supp}(\tau^h)=\{\underline{\mu}, \lambda, \overline{\mu}\} = \{\frac{\lambda \varepsilon_1}{\overline{\tau}^d(\underline{\mu})}, \lambda, \frac{\lambda(1-\varepsilon_1)}{\overline{\tau}^d(\overline{\mu})}\}$ (these scalars denote the second entries of posterior beliefs) with $\Delta \theta = (\theta_1 - \theta_0) (\overline{\mu}-\underline{\mu})$, and 
    \begin{align*}
        &\tau^h(\underline{\mu})=\frac{\overline{\tau}^d(\underline{\mu})\nabla c(\lambda)}{D \Delta \theta},\quad \tau^h(\lambda)=1-\frac{\nabla c(\lambda)}{D \Delta \theta}, \\
        & \tau^h(\overline{\mu})=\frac{\overline{\tau}^d(\overline{\mu}) \nabla c(\lambda)}{D \Delta \theta}. 
    \end{align*}
    Note that $\overline{\tau}^d(\underline{\mu})=(1-\lambda)(1-\varepsilon_0)+\lambda\varepsilon_1$ and $\overline{\tau}^d(\overline{\mu})=(1-\lambda)\varepsilon_0+\lambda(1-\varepsilon_1)$ are for the distribution over posteriors under fully-informative tagging. Then, we can verify that the hybrid posterior distribution $\tau^h$ satisfies both constraints in the sender's problem. For the first constraint, 
    \begin{align*}
        \E_{\tau^h}[\mu] &= \frac{\overline{\tau}^d(\underline{\mu})\nabla c(\lambda)}{D \Delta \theta} \frac{\lambda \varepsilon_1}{\overline{\tau}^d(\underline{\mu})} \\
        & \quad + \left[1-\frac{\nabla c(\lambda)}{D \Delta \theta}\right] \lambda + \frac{\overline{\tau}^d(\overline{\mu}) \nabla c(\lambda)}{D \Delta \theta} \frac{\lambda (1-\varepsilon_1)}{\overline{\tau}^d(\overline{\mu})}=\lambda.
    \end{align*}
    As for the second constraint, 
    \begin{align*}
        &\E_{\tau^h} [f(\mu)]\\ &= \frac{\overline{\tau}^d(\underline{\mu})\nabla c(\lambda)}{D \Delta \theta} \left[\frac{-1}{1-\lambda} \frac{(1-\lambda)(1-\varepsilon_0)}{\overline{\tau}^d(\underline{\mu})} + \frac{1}{\lambda}\frac{\lambda\varepsilon_1}{\overline{\tau}^d(\underline{\mu})}\right]\frac{\lambda\varepsilon_1}{\overline{\tau}^d(\underline{\mu})} \\
        & \quad +\left(1-\frac{\nabla c(\lambda)}{D \Delta \theta}\right)\left[\frac{-1}{1-\lambda}(1-\lambda) + \frac{1}{\lambda} \lambda \right]\lambda -\nabla c(\lambda)\\
        & \quad + \frac{\overline{\tau}^d(\overline{\mu}) \nabla c(\lambda)}{D \Delta \theta} \left[\frac{-1}{1-\lambda} \frac{(1-\lambda) \varepsilon_0}{\overline{\tau}^d(\overline{\mu})} + \frac{1}{\lambda}\frac{\lambda (1-\varepsilon_1)}{\overline{\tau}^d(\overline{\mu})}\right]\frac{\lambda (1-\varepsilon_1)}{\overline{\tau}^d(\overline{\mu})}\\
        &= (1-\varepsilon_0-\varepsilon_1)\frac{\nabla c(\lambda)}{D \Delta \theta} (\overline{\mu}-\underline{\mu}) -\nabla c(\lambda)=0.
    \end{align*}
     For the special case $\lambda=0$, $\underline{\mu}=\overline{\mu}=0$, and $\operatorname{supp}(\tau^h)$ reduces to $\{0\}$ and $\tau^h(0)=1$, which also satisfies both constraints in the sender's problem. This construct implies that for any $\lambda\in [0, \bar{\lambda}]$, one can find a feasible $\tau$, and hence, $\lambda$ is also implementable.
\end{proof}

\begin{corollary}
    $\lambda=0$ is implementable under arbitrary signaling while $\bar{\lambda}$ is implementable if and only if the signaling is fully informative. 
\label{cor:lambda_bar}
\end{corollary}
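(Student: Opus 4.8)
The plan is to read both claims off the single algebraic identity established in the proof of \Cref{prop:feasible}, namely
\begin{align*}
\bar v_A^d(1)-\bar v_A^d(0)=D\,(\pi(1|1)+\pi(0|0)-1)\,(\mu_1-\mu_0),
\end{align*}
and to recall that, under \Cref{prop:feasible}, an effort $\lambda$ is \emph{implementable} precisely when some signaling makes the IC constraint $\nabla c(\lambda)=\bar v_A^d(1)-\bar v_A^d(0)$ hold with a Bayesian-plausible posterior distribution. I would then simply evaluate this identity at the two extreme effort levels $\lambda=0$ and $\lambda=\bar\lambda$, with $D=\operatorname{det}(d)=1-\varepsilon_0-\varepsilon_1>0$ by \Cref{assump:eps}.

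For $\lambda=0$ I would first note that the prior degenerates to $\theta(0)=(1,0)$, so Bayesian plausibility $\E_{\tau^d}[\mu]=0$ together with $\mu\ge 0$ forces every induced posterior to equal $0$, irrespective of the signaling $\pi$. Consequently $v_A(\mu_\sigma)=\mu_\sigma=0$ for all $\sigma$, the right-hand side of the identity vanishes, and the IC constraint reduces to $0=\nabla c(0)$, which holds by \Cref{assump:cost}. Hence every signaling implements $\lambda=0$; equivalently, since the belief system collapses to $0$ independently of $\pi$, the agent's objective becomes $-c(\lambda')$, which the strictly increasing convex cost maximizes uniquely at $\lambda'=0$.

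For $\bar\lambda$ I would specialize the identity to $\lambda=\bar\lambda$, where by definition $\nabla c(\bar\lambda)=D(\overline\mu-\underline\mu)$, so that IC becomes $(\pi(1|1)+\pi(0|0)-1)(\mu_1-\mu_0)=\overline\mu-\underline\mu$. Using the two bounds $\pi(1|1)+\pi(0|0)-1\le 1$ and $\mu_1-\mu_0\le\overline\mu-\underline\mu$ (every feasible posterior lies in $[\underline\mu,\overline\mu]$, \Cref{sec:feasible_mu}), the left-hand side attains $\overline\mu-\underline\mu$ only when both factors are extremal, i.e. $\pi(1|1)=\pi(0|0)=1$ and $\mu_0=\underline\mu$, $\mu_1=\overline\mu$; the first of these is exactly fully informative tagging. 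The converse is immediate: fully informative tagging gives $\pi(1|1)+\pi(0|0)-1=1$ and posteriors $\underline\mu,\overline\mu$, so IC holds and $\bar\lambda$ is implementable. A cleaner route to the same conclusion, which I would present in parallel, is to observe that the feasibility construction $\tau^h$ of \Cref{prop:feasible} degenerates at $\lambda=\bar\lambda$: its interior weight $1-\nabla c(\bar\lambda)/(D\Delta\theta)$ equals $0$, so the unique feasible posterior distribution is $\overline\tau^d$, the one induced by fully informative tagging.

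The main obstacle is the ``only if'' direction, where I must argue that equality in the product bound forces \emph{both} factors to their extremes even though $\mu_0,\mu_1$ are not free but tied to $\pi$ through the consistency condition. The clean lemma to invoke is that a product of two numbers, each bounded in magnitude by a fixed positive constant, can reach that constant only when each factor is individually extremal with matching signs. This is also where I would dispose of the degenerate label-swapped scheme $\pi(1|1)=\pi(0|0)=0$, which drives both factors to $-1$ times their magnitude and is informationally equivalent to a relabeling of fully informative tagging; under the convention that the tag $1$ denotes ``real,'' this case is ruled out, leaving fully informative tagging as the unique implementer of $\bar\lambda$.
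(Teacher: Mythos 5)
Your proof is correct and takes essentially the paper's own route: both claims are read off the identity $\bar v_A^d(1)-\bar v_A^d(0)=D\,(\pi(1|1)+\pi(0|0)-1)(\mu_1-\mu_0)$ from the proof of \Cref{prop:feasible}, with $\nabla c(0)=0$ and the collapse of all posteriors to $0$ handling $\lambda=0$, and extremality of both factors handling $\bar\lambda$. If anything, you are more careful than the paper's appendix on the ``only if'' direction---you make the two-factor extremality argument explicit and dispose of the label-swapped scheme $\pi(1|1)=\pi(0|0)=0$, which the paper silently identifies with transparent tagging---though note that your ``parallel'' argument via the degeneration of $\tau^h$ at $\bar\lambda$ only shows that one particular construction collapses to $\overline{\tau}^d$ and would not by itself establish that no other signaling implements $\bar\lambda$.
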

\begin{proof}
    The proof is provided in Appendix \ref{app:lambda_bar}.
\end{proof}

The above discussion addresses the feasibility condition for the agent. We now turn to the sender's problem, given an implementable effort $\lambda$. Let $\tau^\lambda$ and $V^\lambda$ denote the optimal solution to the sender's problem \eqref{eq;sender-max} with fixed $\lambda$, and the corresponding objective value, respectively. Define the set $F^\lambda\subset \R^{|\Omega|+2}$: $F^\lambda=\{(\mu, f(\mu), v_S(\mu)): \mu\in [\underline{\mu}, \overline{\mu}]\}$. By construction, each entry of any element in $F^\lambda$ corresponds to the integrand in the three objects in the sender's problem \eqref{eq;sender-max}. These integrands are referred to as ex-post values. Let $co(F^\lambda)$ denote the convex hull of $F^\lambda$, which includes all the ex-ante values generated by a probability distribution $\overline{\tau}^d \in \Delta([\underline{\mu}, \overline{\mu}])$. The following proposition offers a geometric insight behind the Lagrangian multiplier method that is widely employed in single-agent constrained optimization \cite{andrzej06nonlinear}, multi-agent constrained games and generalized Nash equilibrium \cite{facchinei10generalized-NE, peng21net-constraint, shutian23erm}, and constrained reinforcement learning \cite{altman99cmadp, tao24dima}. 

\begin{proposition}
    Given an implementable effort $\lambda$, the maximal utility the sender can attain is $V^\lambda=\max\{v: (\theta(\lambda), 0, v)\in co(F^\lambda)\}$.
\label{prop:v_lambda}
\end{proposition}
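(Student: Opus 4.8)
The plan is to read Proposition~\ref{prop:v_lambda} as a convexification statement that exploits the linearity of the transformed sender's problem \eqref{eq;sender-max}--\eqref{eq: ic} in its decision variable $\tau^d$. The objective $\E_{\tau^d}[v_S(\mu)]$ and both equality constraints \eqref{eq:bp}--\eqref{eq: ic} are expectations of the single ex-post value map $g(\mu):=(\mu,f(\mu),v_S(\mu))$, whose range over the feasible belief interval is exactly $F^\lambda=\{g(\mu):\mu\in[\underline{\mu},\overline{\mu}]\}$. Consequently, choosing a distribution $\tau^d$ and recording its induced barycenter $\E_{\tau^d}[g(\mu)]=(\E_{\tau^d}[\mu],\E_{\tau^d}[f(\mu)],\E_{\tau^d}[v_S(\mu)])$ is the same as selecting a point of $co(F^\lambda)$; the BP and IC constraints pin its first $|\Omega|+1$ coordinates to $(\theta(\lambda),0)$, and the last coordinate is the sender's attained utility. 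The whole argument is therefore a translation between ``feasible $\tau^d$'' and ``point of $co(F^\lambda)$ lying in the slice $\{(\theta(\lambda),0,v)\}$.''

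First I would establish the ``$\le$'' inequality. For any feasible $\tau^d$ with objective value $v$, constraint \eqref{eq:bp} gives $\E_{\tau^d}[\mu]=\theta(\lambda)$, constraint \eqref{eq: ic} gives $\E_{\tau^d}[f(\mu)]=0$, and the objective gives $\E_{\tau^d}[v_S(\mu)]=v$. Since $\tau^d$ is supported on the feasible belief range $[\underline{\mu},\overline{\mu}]$ identified in Section~\ref{sec:feasible_mu}, the vector $(\theta(\lambda),0,v)=\E_{\tau^d}[g(\mu)]$ is a mean of points of $F^\lambda$ and hence lies in $co(F^\lambda)$. In particular the optimum $V^\lambda$ yields a point $(\theta(\lambda),0,V^\lambda)\in co(F^\lambda)$, so $V^\lambda\le\max\{v:(\theta(\lambda),0,v)\in co(F^\lambda)\}$.

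Next I would prove ``$\ge$'' by recovering a distribution from the hull. Given any $v$ with $(\theta(\lambda),0,v)\in co(F^\lambda)$, Carath\'eodory's theorem applied in $\R^{|\Omega|+2}$ expresses this point as a finite convex combination $\sum_j w_j\,g(\mu_j)$ with $\mu_j\in[\underline{\mu},\overline{\mu}]$, $w_j\ge 0$, and $\sum_j w_j=1$. The discrete distribution $\tau^d$ placing mass $w_j$ on $\mu_j$ then satisfies $\sum_j w_j\mu_j=\theta(\lambda)$, $\sum_j w_j f(\mu_j)=0$, and $\sum_j w_j v_S(\mu_j)=v$; that is, it meets BP and IC and attains objective $v$, so it is feasible for \eqref{eq;sender-max} and gives $V^\lambda\ge v$. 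Maximizing over admissible $v$ delivers the reverse inequality.

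Finally I would address well-posedness and combine the bounds. Because $v_S(\mu)=\mu^2$ and $f$ (built from $\nabla_\lambda\log\theta$ and $v_A(\mu)=\mu$) are continuous on the compact interval $[\underline{\mu},\overline{\mu}]$, the range $F^\lambda$ is compact, and in finite dimension its convex hull $co(F^\lambda)$ is compact as well; the slice $\{v:(\theta(\lambda),0,v)\in co(F^\lambda)\}$ is thus closed and bounded, and it is nonempty exactly because $\lambda$ is implementable, for which Proposition~\ref{prop:feasible} supplies a feasible $\tau^d$ (e.g.\ the hybrid $\tau^h$). Hence the maximum is attained, and the two inequalities coincide to give $V^\lambda=\max\{v:(\theta(\lambda),0,v)\in co(F^\lambda)\}$. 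The step I expect to be the main obstacle is this last bookkeeping: verifying that a general, possibly non-discrete, optimal $\tau^d$ and the finite convex-combination representation generate the \emph{same} attainable set, and that the hull slice yields a genuine maximum rather than a mere supremum; the linearity underlying the two translation steps is otherwise immediate.
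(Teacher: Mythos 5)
Your proof is correct and takes essentially the same route as the paper's: both arguments translate feasible posterior distributions $\tau^d$ into points of $co(F^\lambda)$ whose first $|\Omega|+1$ coordinates are pinned to $(\theta(\lambda),0)$ by the BP and IC constraints, and identify $V^\lambda$ with the maximal third coordinate on that slice. Your write-up is in fact more complete than the paper's two-line proof, since you make explicit the Carath\'eodory representation needed to recover a feasible discrete $\tau^d$ from a hull point, and the compactness of $F^\lambda$ and $co(F^\lambda)$ that guarantees both the barycenter inclusion for general (non-discrete) $\tau^d$ and that the slice attains a genuine maximum rather than a supremum---steps the paper leaves implicit.
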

\begin{proof}
    The proof is provided in Appendix \ref{app:V_lambda}.
\end{proof}

The proposition provides geometric insight into the solution: the point $(\theta(\lambda), 0, V^\lambda)$ lies on the boundary of the convex set $co(F^\lambda)$. Therefore, a supporting hyperplane exists at $(\theta(\lambda), 0, V^\lambda)$, which leads to the following characterization.

\begin{theorem}[Lagrangian Characterization]
    Given an implementable $\lambda$, a distribution of posteriors $\tau^\lambda$ is a solution to the sender's problem if and only if it satisfies \eqref{eq:bp}, \eqref{eq: ic}, and there exists $\psi\in \R$, $\rho\in \R$, and $\varphi\in \R^{|\Omega|}$ such that 
    \begin{equation*}
        \mathcal{L}(\mu, \psi,\varphi)=v_S(\mu)+\psi f(\mu)-\langle \varphi , \mu \rangle \leq \rho , \forall \mu\in \Delta^d(\Omega),\label{eq:lagrangian}
    \end{equation*}
    where the equality holds for all $\mu$ such that $\tau^\lambda(\mu)>0$.
\label{prop:Lagrangian}
\end{theorem}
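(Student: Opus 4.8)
The plan is to convert the geometric statement of Proposition~\ref{prop:v_lambda} into an analytic (Lagrangian) certificate via a supporting hyperplane, and then close the characterization with a weak-duality argument in both directions. Throughout I identify the feasible posterior set $\Delta^d(\Omega)$ with the interval $[\underline{\mu},\overline{\mu}]$ established in Section~\ref{sec:feasible_mu}.

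\emph{Necessity of the multipliers.} By Proposition~\ref{prop:v_lambda}, the point $p^\star=(\theta(\lambda),0,V^\lambda)$ lies on the boundary of the compact convex set $co(F^\lambda)\subset\R^{|\Omega|+2}$ (compactness holds since $\mu$ ranges over the compact interval $[\underline{\mu},\overline{\mu}]$ and $f,v_S$ are continuous). The supporting hyperplane theorem then yields a nonzero normal $(-\varphi,\psi,\beta)\in\R^{|\Omega|}\times\R\times\R$ with
$$-\langle\varphi,\mu\rangle+\psi\,y+\beta\,v\ \le\ -\langle\varphi,\theta(\lambda)\rangle+\beta V^\lambda,\qquad\forall (\mu,y,v)\in co(F^\lambda).$$
The crux is to show the hyperplane is non-vertical, i.e.\ $\beta>0$. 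I would argue this in two steps: (i) $\beta\neq0$ because $p^\star$ projects to $(\theta(\lambda),0)$ lying in the relative interior of $\operatorname{proj}_{(\mu,y)}\bigl(co(F^\lambda)\bigr)$---for an implementable $\lambda\in(0,\bar\lambda)$ the hybrid construction of Proposition~\ref{prop:feasible} expresses $(\theta(\lambda),0)$ as a convex combination of the three non-collinear points $(\underline{\mu},f(\underline{\mu})),(\lambda,f(\lambda)),(\overline{\mu},f(\overline{\mu}))$ with strictly positive weights, which places it in the interior and rules out any vertical support; and (ii) $\beta>0$ rather than $\beta<0$, since $V^\lambda$ is the maximal $v$ on the slice $\{(\theta(\lambda),0,v)\}\cap co(F^\lambda)$, so the outward normal points upward. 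Normalizing $\beta=1$ and restricting the displayed inequality to the generators $(\mu,f(\mu),v_S(\mu))$ with $\mu\in\Delta^d(\Omega)$ gives precisely $\mathcal{L}(\mu,\psi,\varphi)=v_S(\mu)+\psi f(\mu)-\langle\varphi,\mu\rangle\le\rho$ with $\rho:=-\langle\varphi,\theta(\lambda)\rangle+V^\lambda$.

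\emph{Complementary slackness.} The optimal $\tau^\lambda$ represents $p^\star$ as a convex combination (finitely many atoms, by Carath\'eodory) of generators of $F^\lambda$. Since this combination lands exactly on the supporting hyperplane, every atom in $\supp(\tau^\lambda)$ must saturate the inequality; otherwise a strictly-below atom would pull the average strictly below the hyperplane, contradicting that $p^\star$ lies on it. Hence $\mathcal{L}(\mu,\psi,\varphi)=\rho$ for all $\mu$ with $\tau^\lambda(\mu)>0$, while feasibility \eqref{eq:bp}--\eqref{eq: ic} is inherited from optimality of $\tau^\lambda$.

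\emph{Sufficiency.} Conversely, suppose $\tau^\lambda$ satisfies \eqref{eq:bp}, \eqref{eq: ic}, and the Lagrangian condition. For any feasible $\tau'$ I would substitute $v_S(\mu)=\mathcal{L}(\mu,\psi,\varphi)-\psi f(\mu)+\langle\varphi,\mu\rangle$ and take expectations: the constraint $\E_{\tau'}[f(\mu)]=0$ kills the $\psi$-term and $\E_{\tau'}[\mu]=\theta(\lambda)$ fixes the last term, giving $\E_{\tau'}[v_S]=\E_{\tau'}[\mathcal{L}]+\langle\varphi,\theta(\lambda)\rangle\le\rho+\langle\varphi,\theta(\lambda)\rangle$, whereas the equality-on-support condition yields $\E_{\tau^\lambda}[v_S]=\rho+\langle\varphi,\theta(\lambda)\rangle$. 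Thus $\tau^\lambda$ attains the upper bound and is optimal. I expect the main obstacle to be justifying non-verticality ($\beta>0$) rigorously, since a vertical support produces degenerate multipliers that fail to certify optimality; this is exactly where implementability ($\lambda\le\bar\lambda$) and the strictly positive three-point decomposition of Proposition~\ref{prop:feasible} do the work, and the boundary case $\lambda=\bar\lambda$ (fully informative, two-point support, cf.\ Corollary~\ref{cor:lambda_bar}) would need separate handling.
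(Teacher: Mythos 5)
Your argument is correct and follows essentially the same route as the paper's proof: a supporting hyperplane at the boundary point $(\theta(\lambda),0,V^\lambda)$ of $co(F^\lambda)$ from Proposition~\ref{prop:v_lambda} supplies the multipliers, equality on $\operatorname{supp}(\tau^\lambda)$ follows from the same averaging contradiction, and sufficiency is the identical weak-duality computation taking expectations under \eqref{eq:bp} and \eqref{eq: ic}. The only divergence is that you explicitly justify the non-vertical normalization $\beta=1$ via the strictly positive three-point hybrid decomposition of Proposition~\ref{prop:feasible} (non-collinearity coming from the strict convexity of $f$ in $\mu$), a step the paper silently assumes by writing the normal directly as $(-\varphi,\psi,1)$; this is a genuine tightening, and your caveat that $\lambda=\bar{\lambda}$ needs separate handling is apt, since there $(\theta(\bar{\lambda}),0)$ lies on the boundary of the projected set and the interiority argument no longer applies.
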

\begin{proof}
    The proof is provided in Appendix \ref{app:Lagrangian}.
\end{proof}

Note that the introduced Lagrangian function $\mathcal{L}(\mu, \psi,\varphi)$ is concerned with the ex-post values (i.e., the belief is realized) while the sender's problem is of ex-ante. Hence, one should inspect the expectation of such Lagrangian with respect to the posterior distribution $\tau^\lambda$: $\mathbb{E}_{\tau^\lambda}[L(\mu, \psi,\varphi)]$. In this case, the convexity/concavity of $\mathcal{L}(\mu, \psi,\varphi)$ becomes a pivotal issue. 

Fixing $\lambda\in (0, \bar{\lambda})$, the Lagrangian's second-order derivative is given by $\frac{\partial^2 \mathcal{L}}{\partial \mu^2} = \nabla^2 v_S(\mu) +\frac{2\psi}{\lambda(1-\lambda)}$.  The sign of $\frac{\partial^2 \mathcal{L}}{\partial \mu^2}$, which indicates the Lagrangian's convexity, is determined by the signs of $\nabla^2 v_S(\mu)$ and $\psi$. Our prior work continues the characterization by inspecting the sign of $\psi$ and the convexity of the Lagrangian in\cite[Prop. 7]{yang2023designing}, which we briefly revisit below. 
\begin{proposition}
    For any $\lambda\in (0,\bar{\lambda}]$, the Lagrange multiplier $\psi$ associated with the solution $\tau^\lambda$ is non-positive. 
\end{proposition}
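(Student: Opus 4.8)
The plan is to prove non-positivity of $\psi$ \emph{without} computing $\tau^\lambda$, by playing the optimal distribution $\tau^\lambda$ off against the fully-informative distribution $\tau^{\mathrm{FI}}$ (supported on $\{\underline\mu,\overline\mu\}$ with mean $\theta(\lambda)=\lambda$) inside the Lagrangian inequality of \Cref{prop:Lagrangian}. The guiding intuition is the sensitivity reading $\psi=-\mathrm{d}V^\lambda(\delta)/\mathrm{d}\delta|_{\delta=0}$, where $V^\lambda(\delta)$ is the sender's value when the IC target is relaxed to $\E_{\tau^d}[f(\mu)]=\delta$: since $v_S(\mu)=\mu^2$ is convex, more dispersed (more informative) signaling simultaneously raises $\E_{\tau^d}[v_S]$ and the incentive $\E_{\tau^d}[f(\mu)]+\nabla c(\lambda)$, so pushing $\delta$ toward the fully-informative value only helps the sender, which forces $\mathrm{d}V^\lambda/\mathrm{d}\delta\ge 0$ and hence $\psi\le 0$. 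The direct argument below converts this reasoning into a one-line inequality and avoids any differentiability of the value function.

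First I would record the consequence of \Cref{prop:Lagrangian}: because the Lagrangian attains its ceiling $\rho$ on $\supp(\tau^\lambda)$, taking expectations and using \eqref{eq:bp}--\eqref{eq: ic} gives $\rho=\E_{\tau^\lambda}[\mathcal{L}]=V^\lambda-\langle\varphi,\theta(\lambda)\rangle$. Next I would evaluate the \emph{pointwise} bound $\mathcal{L}(\mu,\psi,\varphi)\le\rho$ at the two beliefs $\underline\mu,\overline\mu$ and average them under $\tau^{\mathrm{FI}}$; since $\underline\mu,\overline\mu\in[\underline\mu,\overline\mu]$, this is legitimate even though $\tau^{\mathrm{FI}}$ violates IC. Writing $\delta^\star:=\E_{\tau^{\mathrm{FI}}}[f(\mu)]$ and $V^{\mathrm{FI}}:=\E_{\tau^{\mathrm{FI}}}[v_S(\mu)]$, and using $\E_{\tau^{\mathrm{FI}}}[\mu]=\theta(\lambda)$, this yields
\begin{equation*}
V^{\mathrm{FI}}+\psi\,\delta^\star-\langle\varphi,\theta(\lambda)\rangle=\E_{\tau^{\mathrm{FI}}}[\mathcal{L}]\le\rho=V^\lambda-\langle\varphi,\theta(\lambda)\rangle,
\end{equation*}
so that $\psi\,\delta^\star\le V^\lambda-V^{\mathrm{FI}}$.

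Two facts then close the argument. (i) $V^{\mathrm{FI}}\ge V^\lambda$: the fully-informative distribution is the Bayesian-persuasion maximizer of the convex objective $v_S$ under \eqref{eq:bp} alone (spreading all mass to the endpoints of $[\underline\mu,\overline\mu]$ maximizes $\E[v_S]$ at fixed mean $\lambda$), whereas $V^\lambda$ optimizes the same objective over the smaller set that additionally satisfies IC; hence $V^\lambda-V^{\mathrm{FI}}\le 0$. (ii) $\delta^\star\ge 0$: by the Bayesian-plausibility identity $\E_{\tau^d}[\E_\mu[\nabla_\lambda\log\theta]\,v_A(\mu)]=\langle\theta_1-\theta_0,\Vec{v}_A^d(\pi)\rangle$ together with the computation in \Cref{prop:feasible}, full information attains the maximal incentive $(\theta_1-\theta_0)D(\overline\mu-\underline\mu)=\nabla c(\bar\lambda)$, so $\delta^\star=\nabla c(\bar\lambda)-\nabla c(\lambda)\ge 0$ because $\lambda\le\bar\lambda$ and $c$ is strictly increasing (\Cref{assump:cost}). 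Combining, $\psi\,\delta^\star\le 0$; for every $\lambda\in(0,\bar\lambda)$ strict monotonicity of $\nabla c$ gives $\delta^\star>0$, whence $\psi\le 0$.

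The main obstacle is the boundary value $\lambda=\bar\lambda$, where $\delta^\star=0$ and the inequality $\psi\,\delta^\star\le 0$ becomes vacuous. Here I would invoke \Cref{cor:lambda_bar}: at $\lambda=\bar\lambda$ the only implementable signaling is fully informative, so the feasible set of \eqref{eq;sender-max} is the singleton $\{\tau^{\mathrm{FI}}\}$ and $\tau^\lambda=\tau^{\mathrm{FI}}$; the supporting multiplier may then be selected as the one-sided limit of the $\psi$'s obtained for $\lambda\uparrow\bar\lambda$, which is $\le 0$ by the interior case, so non-positivity extends to the closed interval $(0,\bar\lambda]$. A secondary point to treat carefully is the identification $\delta^\star=\nabla c(\bar\lambda)-\nabla c(\lambda)$, which is exactly where the persuasion-to-belief translation of \Cref{prop:feasible} and Bayesian plausibility does the real work.
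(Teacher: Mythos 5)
Your proposal is correct and is essentially the paper's own argument in streamlined form. The paper likewise plays the optimal $\tau^\lambda$ off against the fully-informative distribution $\tilde{\tau}$ (the maximizer of the IC-relaxed standard persuasion problem, by convexity of $v_S$): it takes the expectation of the Lagrangian of Theorem~\ref{prop:Lagrangian} under $\tilde{\tau}$ and combines it with $\E_{\tilde{\tau}}[v_S(\mu)]\geq \E_{\tau^\lambda}[v_S(\mu)]$ to obtain $\psi\, g(\lambda)\leq 0$ with $g(\lambda)=\E_{\tilde{\tau}}[f(\mu)]\geq 0$. Your $\delta^\star$ is exactly the paper's $g(\lambda)$, and your one-line inequality $\psi\,\delta^\star\leq V^\lambda-V^{\mathrm{FI}}$ merely bypasses the paper's intermediate multipliers $(\tilde{\rho},\tilde{\varphi})$ for the relaxed problem; this is bookkeeping, not a different route.

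One local misstatement should be fixed: $\delta^\star=\nabla c(\bar{\lambda})-\nabla c(\lambda)$ is wrong, because $\underline{\mu}$ and $\overline{\mu}$ depend on $\lambda$; full information at effort $\lambda$ attains incentive $(\theta_1-\theta_0)D\bigl(\overline{\mu}(\lambda)-\underline{\mu}(\lambda)\bigr)$, which equals $\nabla c(\bar{\lambda})$ only at $\lambda=\bar{\lambda}$. The correct identity is $\delta^\star=(\theta_1-\theta_0)D\bigl(\overline{\mu}(\lambda)-\underline{\mu}(\lambda)\bigr)-\nabla c(\lambda)$, which is still nonnegative for all $\lambda\leq\bar{\lambda}$ by the necessity computation in Proposition~\ref{prop:feasible} (this is precisely how the paper argues $g(\lambda)\geq 0$), so your conclusion survives, though the claimed strict positivity on $(0,\bar{\lambda})$ then rests on the single crossing of $\nabla c$ and $(\theta_1-\theta_0)D(\overline{\mu}-\underline{\mu})$ rather than on monotonicity of $\nabla c$ alone. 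Finally, your explicit treatment of the endpoint $\lambda=\bar{\lambda}$, where $\delta^\star=0$ renders the inequality vacuous, addresses a point the paper glosses over (its step ``$g(\lambda)\geq 0$ implies $\psi\leq 0$'' is equally uninformative there); your limit-selection patch is sensible, modulo checking that the multipliers stay bounded as $\lambda\uparrow\bar{\lambda}$ so that a convergent (sub)sequence exists.
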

\begin{proof}
Consider a relaxation to the original problem without IC constraint \eqref{eq: ic}:
\begin{align}
    \widetilde{V}^\lambda=\max_{\tau} \E_{\tau}[v_S(\mu)] \text{ subject to } \E_\tau[\mu]=p(\lambda),
\label{eq:relax}
\end{align}
which is exactly the standard Bayesian persuasion \cite{kamenica11BP}.

Denote by $\tilde{\tau}^\lambda$ the solution to the relaxed problem when fixing $\lambda$. Applying the Lagrangian characterization developed in Theorem \ref{prop:Lagrangian}, there exists $\tilde{\rho}$ and $\tilde{\varphi}$ such that $v_S(\mu)\leq \tilde{\rho} + \tilde{\varphi} \mu$, for all $\mu\in [0,1]$, with equality if $\tilde{\tau}(\mu)>0$.  Define $g(\lambda)=\E_{\tilde{\tau}}[f(\mu)]$. Let $\tau^\lambda$ be the solution to the original problem. We aim to prove $\psi g(\lambda)\leq 0$ in the following. The definition of two Lagrangians give
\begin{align}
    \rho+\varphi \lambda=\E_{\tau^\lambda}[v_S(\mu)]\leq \E_{\tilde{\tau}^\lambda}[v_S(\mu)]=\tilde{\rho}+\tilde{\varphi}\lambda.\label{eq:tilde-ineq}
\end{align}
Finally, taking the expectation of the original Lagrangian in \Cref{prop:Lagrangian} with respect to $\tilde{\tau}$, we obtain 
\begin{equation}
    \E_{\tilde{\tau}}[v_S(\mu)]+\psi \E_{\tilde{\tau}}[f(\mu)]\leq \rho +\varphi \lambda \Leftrightarrow  \tilde{\rho} +\tilde{\varphi} \lambda +\psi g(\lambda) \leq \rho +\varphi \lambda \label{eq:expect-ineq}
\end{equation}
Combining \eqref{eq:expect-ineq} and \eqref{eq:tilde-ineq} leads to $\psi g(\lambda)\leq 0$. 

The rest of the proof establishes that $g(\lambda)\geq0$ for $\lambda \in (0,\bar{\lambda}]$. Note that the sender's expected utility $v_S(\mu)$ is convex in $\mu$. The standard persuasion analysis gives that the unique optimal signaling is the fully informative one \cite[Section 3]{kamenica11BP}, implying that $\operatorname{supp}(\tilde{\tau})=\{\underline{\mu}, \overline{\mu}\}$, and $\tilde{\tau}(\underline{\mu})=\overline{\tau}^d(\underline{\mu})$, $\tilde{\tau}(\overline{\mu})=\overline{\tau}^d(\overline{\mu})$. Direct calculation yields $g(\lambda)=\E_{\tilde{\tau}}[f(\mu)]=D(\overline{\mu}-\underline{\mu})-\nabla c(\lambda)\geq 0$ according to $(\theta_1-\theta_0)D(\overline{\mu}-\underline{\mu}) \geq \nabla c(\lambda)$ in the proof of Proposition \ref{prop:feasible}. Hence, for $\lambda\in (0,\bar{\lambda}]$, $g(\lambda) \geq 0$ implies that $\psi\leq 0$. 
\end{proof}

Although $\frac{\partial^2 \mathcal{L}}{\partial \mu^2}$ is not necessarily non-negative, our prior work \cite[Proposition 8]{yang2023designing} asserts that the Lagrangian must be a convex function of $\mu$. This is established by contradiction: if $\frac{\partial^2 \mathcal{L}}{\partial \mu^2} < 0$, the sender's optimal signaling is degenerate (only one belief) and is strictly dominated by the hybrid signaling, which contradicts the optimality. However, the posterior belief space shrinks under misdetection, as considered in this work, and the resulting hybrid signaling does not necessarily retain strict dominance. In such cases, the assumption that the degenerate signaling is strictly dominated may no longer hold.

In this work, we provide an alternative perspective to show that even if there exists a misperception (or detection error) $d$, the sender’s optimal
signaling is still the fully informative one, under which the agent
is incentivized not to create misinformation to the best effort.

\begin{theorem}
Given the sender's utility $v_S(\mu)$ is convex and non-decreasing, the optimal signaling is fully informative and encourages the agent to implement $\bar{\lambda}$.
\label{prop:optimal-is}
\end{theorem}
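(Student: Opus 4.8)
The plan is to solve the joint maximization over $(\tau^d,\lambda)$ in \eqref{eq;sender-max} by first fixing an implementable $\lambda$, bounding the inner optimum by a relaxation that discards the IC constraint \eqref{eq: ic}, and then optimizing over $\lambda\in[0,\bar\lambda]$. This sidesteps entirely the question of whether the Lagrangian $\mathcal{L}(\mu,\psi,\varphi)$ is convex, which (as the text notes) is no longer guaranteed under misdetection. Concretely, for fixed $\lambda$ I would invoke the relaxed problem \eqref{eq:relax}, keeping only Bayesian plausibility $\E_\tau[\mu]=\theta(\lambda)$ on the restricted belief interval $\mu\in[\underline{\mu},\overline{\mu}]$. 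Since $v_S$ is convex, the standard persuasion argument \cite{kamenica11BP} yields that the relaxed optimizer is fully informative, supported on the extreme beliefs $\{\underline{\mu},\overline{\mu}\}$ with weights $\overline{\tau}^d(\underline{\mu}),\overline{\tau}^d(\overline{\mu})$, and attains $\widetilde{V}^\lambda=\overline{\tau}^d(\underline{\mu})\,v_S(\underline{\mu})+\overline{\tau}^d(\overline{\mu})\,v_S(\overline{\mu})$. Because the relaxed feasible set is strictly larger, $V^\lambda\le\widetilde{V}^\lambda$ for every implementable $\lambda$.

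Second, I would establish tightness precisely at $\lambda=\bar\lambda$. As already computed in the proof of the preceding proposition, the fully informative signaling evaluates the IC integrand to $g(\lambda)=\E_{\tilde\tau}[f(\mu)]=D(\overline{\mu}-\underline{\mu})-\nabla c(\lambda)$, and the defining equation of $\bar\lambda$ gives $g(\bar\lambda)=0$. Hence at $\lambda=\bar\lambda$ the relaxed optimizer \emph{already} satisfies IC, is therefore feasible for the original problem, and consequently $V^{\bar\lambda}=\widetilde{V}^{\bar\lambda}$.

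Third, and this is the crux, I would show $\widetilde{V}^\lambda$ is non-decreasing on $[0,\bar\lambda]$. Writing $\beta(\lambda)=\overline{\tau}^d(\underline{\mu})=(1-\lambda)(1-\varepsilon_0)+\lambda\varepsilon_1$, so that $\beta'(\lambda)=-(1-\varepsilon_0-\varepsilon_1)=-D$, a direct differentiation of $\widetilde{V}^\lambda=\beta\,v_S(\underline{\mu})+(1-\beta)\,v_S(\overline{\mu})$ gives
\begin{align*}
\frac{d}{d\lambda}\widetilde{V}^\lambda = \beta'(\lambda)\big(v_S(\underline{\mu})-v_S(\overline{\mu})\big)+\beta(\lambda)\,\nabla v_S(\underline{\mu})\,\underline{\mu}'+\big(1-\beta(\lambda)\big)\nabla v_S(\overline{\mu})\,\overline{\mu}'.
\end{align*}
Here $\beta'(\lambda)=-D<0$ by \Cref{assump:eps}, while $v_S(\underline{\mu})-v_S(\overline{\mu})\le 0$ because $v_S$ is non-decreasing and $\underline{\mu}\le\overline{\mu}$, so the first term is non-negative; the remaining two terms are non-negative since $\nabla v_S\ge 0$ and $\underline{\mu}',\overline{\mu}'\ge 0$ by \Cref{prop:increasing_mu}. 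Thus $\tfrac{d}{d\lambda}\widetilde{V}^\lambda\ge 0$. Chaining the three facts, for every feasible $\lambda\le\bar\lambda$ (all feasible effort satisfies this by \Cref{prop:feasible}) I obtain $V^\lambda\le\widetilde{V}^\lambda\le\widetilde{V}^{\bar\lambda}=V^{\bar\lambda}$, so the maximal effort $\bar\lambda$ is optimal for the sender. Finally, \Cref{cor:lambda_bar} states that $\bar\lambda$ is implementable if and only if the signaling is fully informative, forcing the optimal signaling to be fully informative and completing the characterization.

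I expect the monotonicity in the third step to be the main obstacle, because the support points $\underline{\mu},\overline{\mu}$ themselves drift with $\lambda$ (both the endpoints and their weights change), so $\widetilde{V}^\lambda$ cannot be treated as the sender's utility at a fixed pair of beliefs; the argument genuinely needs the monotone-belief property of \Cref{prop:increasing_mu} together with the sign $\beta'=-D<0$, which in turn relies on $D>0$ from \Cref{assump:eps} — without it the first term could change sign and monotonicity might fail. I would emphasize that the two hypotheses play distinct roles: convexity of $v_S$ is used only to guarantee the relaxed optimizer is fully informative, whereas the non-decreasing property is what drives the monotonicity of $\widetilde{V}^\lambda$, so both are essential.
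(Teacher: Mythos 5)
Your proposal is correct and takes essentially the same route as the paper's proof: bound $V^\lambda$ by the IC-free relaxation \eqref{eq:relax} (solved by fully informative signaling since $v_S$ is convex), show $\widetilde{V}^\lambda$ is non-decreasing on $[0,\bar{\lambda}]$ using the non-decreasing property of $v_S$, the sign of the weight derivative $-D<0$ from \Cref{assump:eps}, and the belief monotonicity of \Cref{prop:increasing_mu}, then close the chain $V^\lambda\leq\widetilde{V}^\lambda\leq\widetilde{V}^{\bar{\lambda}}=V^{\bar{\lambda}}$ and conclude via \Cref{cor:lambda_bar}. Your only (harmless) variation is justifying the tightness $V^{\bar{\lambda}}=\widetilde{V}^{\bar{\lambda}}$ by checking directly that the relaxed optimizer satisfies IC at $\bar{\lambda}$ via $g(\bar{\lambda})=0$, whereas the paper invokes \Cref{cor:lambda_bar} (that $\bar{\lambda}$ is implementable only under fully informative tagging); both arguments are valid.
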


\begin{proof}
    Consider the relaxation in \eqref{eq:relax},
which is exactly the standard Bayesian persuasion \cite{kamenica11BP}. 
Due to $v_S(\mu)$ being convex, the standard analysis gives that $\widetilde{V}^\lambda$ is attained by the fully-informative signaling \cite[Section 3]{kamenica11BP}.
Then, under fully-informative $\overline{\tau}^d$ with $\supp=\{\underline{\mu}, \overline{\mu}\}$ in Section \ref{sec:feasible_mu}, we have
\begin{align*}
    \widetilde{V}^\lambda &= \E_{\overline{\tau}^d}[v_S(\mu)]=\overline{\tau}^d(\underline{\mu})v_S(\underline{\mu}) + \overline{\tau}^d(\overline{\mu})v_S(\overline{\mu})\\
    \Rightarrow \nabla_{\lambda} \widetilde{V}^\lambda &= \nabla_{\lambda} \overline{\tau}^d(\underline{\mu}(\lambda)) v_S(\underline{\mu})+\overline{\tau}^d(\underline{\mu})\nabla_{\lambda} v_S(\underline{\mu}(\lambda))\\
    & \quad + \nabla_{\lambda} \overline{\tau}^d(\overline{\mu}(\lambda)) v_S(\overline{\mu})+\overline{\tau}^d(\overline{\mu})\nabla_{\lambda} v_S(\overline{\mu}(\lambda))\\
    &=(1-\varepsilon_0-\varepsilon_1)\left(v_S(\overline{\mu})-v_S(\underline{\mu})\right)\\
    &\quad + \overline{\tau}^d(\underline{\mu}) \nabla_{\underline{\mu}}v_S(\underline{\mu})\nabla_{\lambda}\underline{\mu}(\lambda)\\ 
    &\quad + \overline{\tau}^d(\overline{\mu}) \nabla_{\overline{\mu}}v_S(\overline{\mu})\nabla_{\lambda}\overline{\mu}(\lambda),
\end{align*}which is positive as $(1-\varepsilon_0-\varepsilon_1)>0$, $v_S(\mu)$ is non-decreasing, and $\underline{\mu}, \overline{\mu}$ are both increasing in $\lambda$. This implies that $\widetilde{V}^{\bar{\lambda}} \geq \widetilde{V}^{\lambda}$ for $\lambda \in [0, \bar{\lambda}]$.
Note that we can obtain: (i) $\widetilde{V}^\lambda \geq V^\lambda$, as there is no IC constraint for the relaxed problem;  (ii) $\widetilde{V}^{\bar{\lambda}} = V^{\bar{\lambda}}$ since $\bar{\lambda}$ is only implementable under fully-informative tagging in the original problem (see Corollary \ref{cor:lambda_bar}), making the objective values for the relaxed and the original problems equivalent. Hence, we have $V^{\bar{\lambda}} = \widetilde{V}^{\bar{\lambda}} \geq \widetilde{V}^\lambda \geq V^\lambda$, which indicates that the optimal signaling for the original problem is fully-informative and encourages $\bar{\lambda}$.

\end{proof}

\vspace{-0.5cm}
\section{Numerical Studies}\label{sec:numerical}
This section first studies the proposed Bayesian persuaded branching processes model under the fully informative tagging policy, and then compares the fully informative tagging with hybrid tagging in the proof of Proposition \ref{prop:feasible}. For each experiment, the branching setup is given by  $N_0=P_0=50, m_{N}=50, q=0.5$, and $t_{i}, i=1, \cdots, 500$. The numerical results in this section are the average of $200$ independent simulations.
We assume the content provider's effort cost is given by $c(\lambda)=k\lambda^2$, with cost coefficient $k > 0.5$, satisfying Assumption \ref{assump:cost}. 

\subsection{Fully Informative Tagging Policy}
Under the fully informative tagging policy, the SNP tags the post according to the perceived true state, i.e., $\sigma= \omega^\prime$, which incentivizes the content provider to exert maximum effort $\bar{\lambda}$ (according to Proposition \ref{cor:lambda_bar}), leading to the following posterior beliefs: $$\underline{\mu}=\frac{\bar{\lambda} \varepsilon_1}{(1-\bar{\lambda})(1-\varepsilon_0)+\bar{\lambda}\varepsilon_1}, \overline{\mu}=\frac{\bar{\lambda}(1-\varepsilon_1)}{(1-\bar{\lambda})\varepsilon_0+\bar{\lambda}(1-\varepsilon_1)}.$$ From Proposition \ref{prop:feasible}, $\bar{\lambda}$ is determined by $\nabla c(\bar{\lambda})=(\theta_1-\theta_0)D(\overline{\mu}-\underline{\mu})$, where $\nabla c(\bar{\lambda})=2k\bar{\lambda}$ and $D=1-\varepsilon_0-\varepsilon_1$. By solving for $\bar{\lambda}$, we arrive at the case setup outlined in Table \ref{tab:setup1}, where we explore scenarios with varying cost coefficients for the content provider and different false alarm rates for the SNP’s detection errors. For instance, when cost coefficient $k=0.6$, the maximum implementable effort is $\bar{\lambda}=0.66$ when false alarms are $\varepsilon_0=\varepsilon_1=0.05$, while $\bar{\lambda}=0.34$ when false alarms are $\varepsilon_0=0.15$ and $=\varepsilon_1=0.20$.

\begin{figure}[!ht]
    \centering
    \begin{subfigure}[t]{0.49\textwidth}
        \centering
        \includegraphics[width=3in]{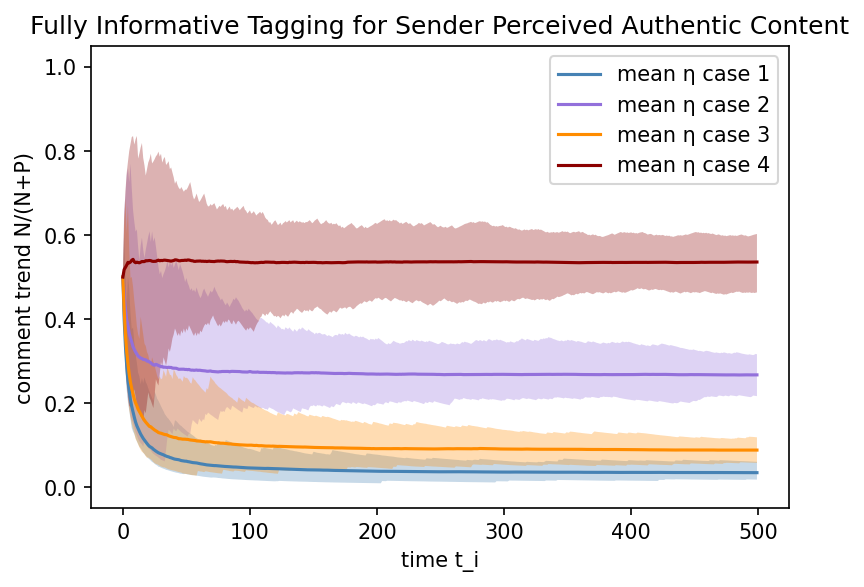}
        \caption{The SNP's perceived authentic post yields a rather positive trend under the fully informative policy.}
        \label{fig:fully_real}
    \end{subfigure}
    \hfill
    \begin{subfigure}[t]{0.49\textwidth}
        \centering
        \includegraphics[width=3.1in]{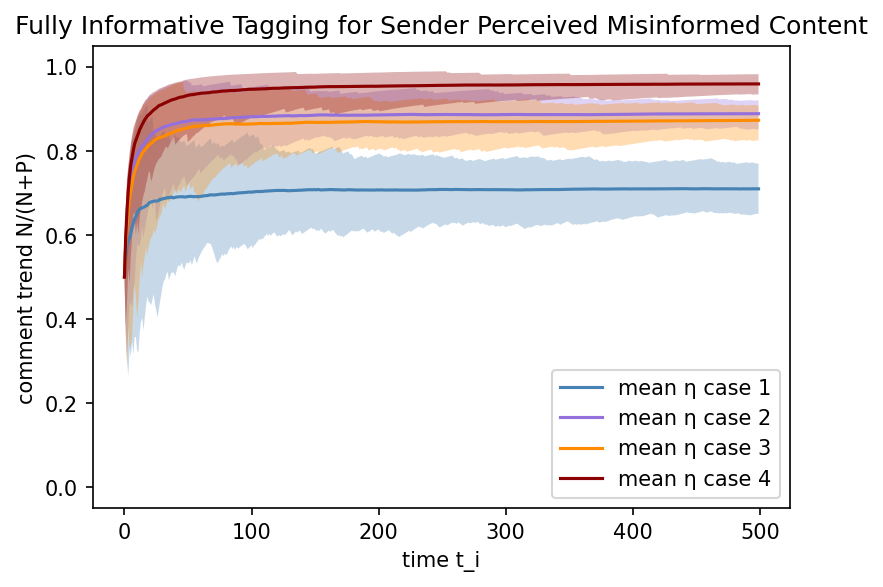}
        \caption{The SNP's perceived post with misinformation yields a negative trend under the fully informative policy.}
        \label{fig:fully_fake}
    \end{subfigure}
    \caption{Simulations of online misinformation circulation under fully informative tagging policy. The shaded region indicates the standard deviation of $\eta^*$, while the line represents the mean of $\eta^*$.}
\label{fig:fully}
\vspace{-3mm}
\end{figure}

By comparing $\bar{\lambda}$ between cases 1 and 3 in Table \ref{tab:setup1}, we observe that a larger cost coefficient $k$ results in a lower maximum implementable effort $\bar{\lambda}$, which is intuitive since higher costs for unveiling the truth deter the content provider from investing effort. Similar results can be obtained between cases 2 and 4. Additionally, higher false alarm rates $\varepsilon_0$ and $\varepsilon_1$ in misinformation detection also reduce the maximum implementable effort $\bar{\lambda}$, as can be seen in cases 1 and 2. This is because, from the content provider's perspective, even with significant effort invested in creating authentic content, the post may still be mis-tagged due to detection errors, leading to negative trends of the post.

\begin{table}[htbp]
\caption{Case Study Under Fully Informative Tagging}
\begin{center}
\begin{tabular}{ccccc}
\toprule
Case \# & cost coefficient $k$ & $\varepsilon_0$ & $\varepsilon_1$ & maximum effort $\bar{\lambda}$ \\
\midrule
$1$ & $0.6$ & $0.05$  & $0.05$ & $0.66$ \\
$2$ & $0.6$ & $0.15$  & $0.20$ & $0.34$  \\
$3$ & $1.0$ & $0.05$  & $0.05$ & $0.40$ \\
$4$ & $1.0$ & $0.15$  & $0.20$ & $0.14$  \\
\bottomrule \\[-0.3em]
\multicolumn{5}{c}{Effort cost $c(\lambda)=k\lambda^2$.}
\end{tabular}
\end{center}
\label{tab:setup1}
\vspace{-5mm}
\end{table}

For the perceived authentic post, $\omega^\prime=1, \sigma=1, \E_{\mu_\sigma}[\omega]=\overline{\mu}$,
and thus $\alpha_{yx}=\alpha_{xx}= 1-\E_{\mu_\sigma}[\omega]=1-\overline{\mu}$. The results for the proportion of negative comments $\eta^*$ are shown in \Cref{fig:fully_real}, demonstrating that higher maximum implementable effort leads to more positive trends. This suggests that reducing detection errors from the platform is crucial for encouraging content providers to invest more effort, which drives positive trends on the platform. On the other hand, for the perceived misinformed post, $\omega^\prime=0, \sigma=0, \E_{\mu_\sigma}[\omega]=\underline{\mu}$, we similarly have $\alpha_{yx}=\alpha_{xx}= 1-\E_{\mu_\sigma}[\omega]=1-\underline{\mu}$. The results for the proportion of negative comments $\eta^*$ in \Cref{fig:fully_fake} reveal that tag $\sigma=0$ yields a negative trend, with lower maximum implementable effort leading to more negative trends.

\subsection{Hybrid Informative Tagging Policy}
Under the hybrid tagging policy specified in \Cref{prop:feasible}, any $\lambda \in (0, \bar{\lambda})$ is implementable; we then consider the case setup listed in Table \ref{tab:setup2}.

\begin{table}[htbp]
\caption{Case Study Under Hybrid Informative Tagging}
\begin{center}
\begin{tabular}{cccccc}
\toprule
Case \# & coefficient $k$ & $\varepsilon_0$ & $\varepsilon_1$ & maximum $\bar{\lambda}$ & chosen $\lambda$ \\
\midrule
$1.1$ & $0.6$ & $0.05$  & $0.05$ & $0.66$ & $0.65$\\
$1.2$ & $0.6$ & $0.05$  & $0.05$ & $0.66$ & $0.40$\\
$2.1$ & $0.6$ & $0.15$  & $0.20$ & $0.34$ & $0.33$\\
$2.2$ & $0.6$ & $0.15$  & $0.20$ & $0.34$ & $0.10$ \\
$3.1$ & $1.0$ & $0.05$  & $0.05$ & $0.40$ & $0.39$\\
$3.2$ & $1.0$ & $0.05$  & $0.05$ & $0.40$ & $0.20$ \\
$4.1$ & $1.0$ & $0.15$  & $0.20$ & $0.14$ & $0.13$\\
$4.2$ & $1.0$ & $0.15$  & $0.20$ & $0.14$ & $0.07$ \\
\bottomrule \\[-0.3em]
\multicolumn{6}{c}{Effort cost $c(\lambda)=k\lambda^2$. chosen $\lambda \leq \bar{\lambda}$.}
\end{tabular}
\end{center}
\label{tab:setup2}
\vspace{-5mm}
\end{table}

For $k=0.6$, the maximum effort is $\bar{\lambda}=0.66$ when the false alarm rates are $\varepsilon_0 = \varepsilon_1 = 0.05$. In this scenario, if the content provider chooses to invest effort $\lambda = 0.65$, the post's comment trend is more positive compared to an investment of $\lambda = 0.4$, as illustrated in case 1.1 and 1.2 of Fig. \ref{fig:hybrid_1_2}. Hence, the more effort the content provider spends, the more positive the post's trend is, and the higher the reputation the content provider can earn. 

However, as the false alarm rates increase, the range of implementable efforts narrows, and the resulting $\eta^*$ becomes strictly greater than half, as demonstrated in cases 2.1 and 2.2 of Fig. \ref{fig:hybrid_1_2}. In this scenario, the content provider may opt not to generate content, as their effort leads to negative trends in the mean under hybrid tagging. 
Similar consequences happen when the cost (cost coefficient) for investigating the truth is too high, as shown in Fig. \ref{fig:hybrid_3_4}. From the SNP's perspective, discouraging UGC generation is not rational if the goal is to maintain an active and engaging platform. Therefore, the SNP opts for a fully informative tagging policy and reducing detection errors to achieve trend outcomes similar to cases 1 and 3 in Fig. \ref{fig:fully}. 

\begin{figure}[!ht]
    \centering
    \begin{subfigure}[t]{0.49\textwidth}
        \centering
        \includegraphics[width=2.5in]{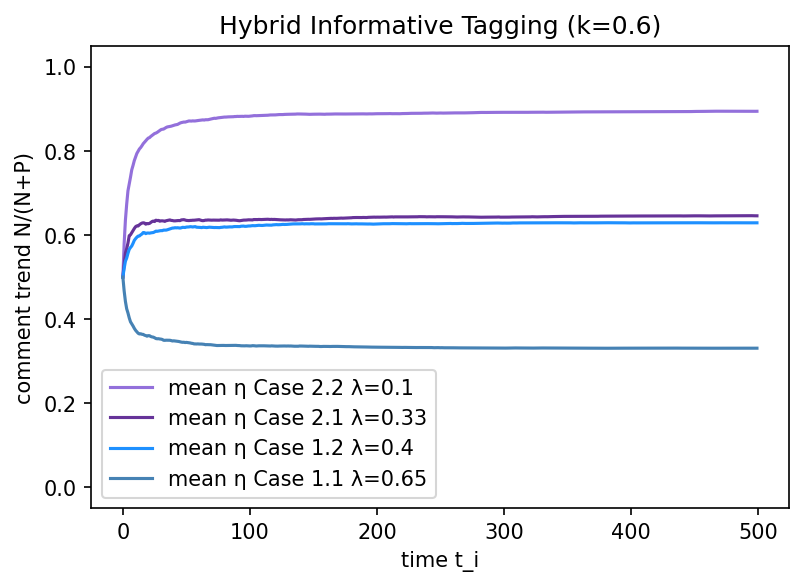}
        \caption{Hybrid tagging policy when the cost coefficient is $k=0.6$ for the content provider.}
        \label{fig:hybrid_1_2}
    \end{subfigure}
    \hfill
    \begin{subfigure}[t]{0.49\textwidth}
        \centering
        \includegraphics[width=2.6in]{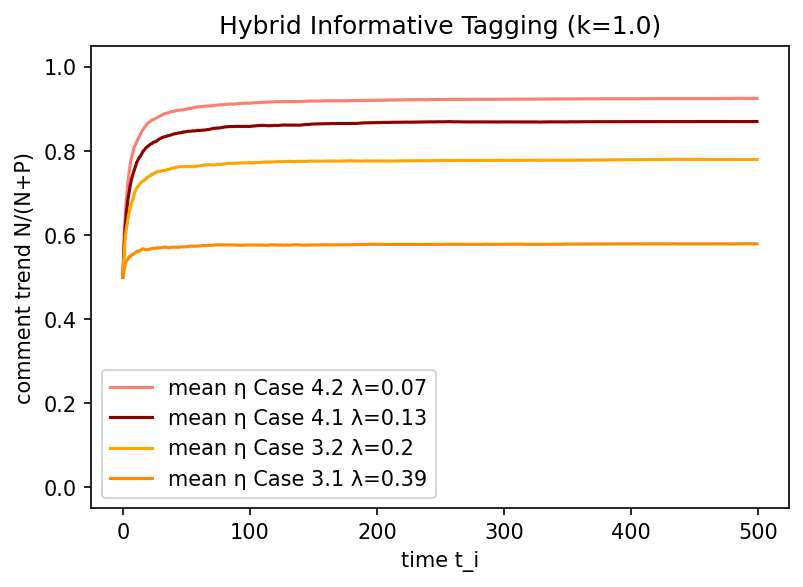}
        \caption{Hybrid tagging policy when the cost coefficient is $k=1.0$ for the content provider.}
        \label{fig:hybrid_3_4}
    \end{subfigure}
    \caption{Simulations of online misinformation circulation when the SNP adopts a hybrid tagging policy. The line represents the mean of $\eta^*$.}
\label{fig:hybrid}
\vspace{-3mm}
\end{figure}

\section{Conclusion}
This work has investigated a preemptive approach to mitigate misinformation spread on SNP by incentivizing the content provider to generate authentic content in the first place. When designing tagging policies to leverage social nudges from population-level user responses, SNP must be cautious about
the potential detection errors of misinformation. Hence, we have developed a three-player persuasion game to model the strategic interaction under misdetection among the SNP, the content provider, and the user, with the spread of misinformation content modeled as a multi-type branching process. By transforming the perfect Bayesian equilibrium into the posterior belief space influenced by detection errors, we have reformulated the SNP's equilibrium as an equality-constrained convex optimization problem, which admits a concise Lagrangian characterization. We show that the SNP’s optimal policy is still transparent tagging, i.e., revealing the content’s perceived authenticity, to the user despite detection errors, which nudges the provider not to generate misinformation, even though the SNP exerts no direct control over the UGC from the content provider.  
One direction of future work would be to explore cases where misdetection is unknown to the content provider, users, or both. The SNP might choose not to disclose false alarms to protect its reputation or sustain user engagement on the platform.

\bibliography{reference}
\bibliographystyle{IEEEtran}

\appendices
\renewcommand{\thesectiondis}[2]{\Alph{section}:}
\section{Proof of Proposition \ref{prop:increasing_mu}} 
\label{app:increasing_mu}
\begin{proof}
    We can show that the partial derivatives
    \begin{align*}
        &\frac{\partial \underline{\mu}}{\partial \lambda}=\frac{\varepsilon_1(1-\varepsilon_0)}{((1-\lambda)(1-\varepsilon_0)+\lambda\varepsilon_1)^2} \geq 0,\\
        &\frac{\partial^2 \underline{\mu}}{\partial \lambda^2}=\frac{2\varepsilon_1(1-\varepsilon_0)(1-\varepsilon_0-\varepsilon_1)}{((1-\lambda)(1-\varepsilon_0)+\lambda\varepsilon_1)^3} \geq 0, \\
        &\frac{\partial \overline{\mu}}{\partial \lambda}=\frac{\varepsilon_0(1-\varepsilon_1)}{((1-\lambda)\varepsilon_0+\lambda(1-\varepsilon_1))^2}\geq 0,\\
        &\frac{\partial^2 \overline{\mu}}{\partial \lambda^2}=\frac{-2\varepsilon_0(1-\varepsilon_1)(1-\varepsilon_0-\varepsilon_1)}{((1-\lambda)\varepsilon_0+\lambda(1-\varepsilon_1))^3}\leq 0,
    \end{align*} as $\varepsilon_0+\varepsilon_1 < 1$, which then complete the proof.
\end{proof} 
\vspace{-1em}
\section{Proof of Corollary \ref{cor:lambda_bar}}\label{app:lambda_bar}
\begin{proof}
    When $\lambda=0$, the prior becomes $\theta_0$. Hence, $\mu_\sigma=0$ regardless of the signaling mechanism $\pi$. Note that $v_A(\mu)=\mu$, and therefore $\bar{v}_A^d(\omega|\pi):=\sum_{\sigma}\sum_{\omega'}d(\omega'|\omega)\pi(\sigma|\omega')v_A(\mu_\sigma)=0$ for any $\omega$ when $\lambda =0$. Consequently, when $\lambda=0$, \eqref{eq:agent-opt} holds for arbitrary $\pi$, as $\nabla c(\lambda)=0$. When $\lambda=\bar{\lambda}$, we begin with the necessity.
    Recall from Proposition \ref{prop:feasible} that the IC constraint \eqref{eq:agent-opt} is 
    \begin{align*}
        \nabla c(\lambda) &= (\theta_1-\theta_0)D(\pi(1|1)+\pi(0|0)-1)(\mu_1-\mu_0). 
    \end{align*} When $\lambda=\bar{\lambda}$, $\nabla c(\lambda)= (\theta_1-\theta_0)D(\overline{\mu}-\underline{\mu})$, and \eqref{eq:agent-opt} holds (i.e., $\bar{\lambda}$ is implementable) if $\pi(1|1)=1, \pi(0|0)=1$, $\mu_1 = \overline{\mu}, \mu_0=\underline{\mu}$, which is the fully-informative signaling case. For sufficiency, as $\bar{\lambda}$ directly satisfies the IC constraint \eqref{eq:agent-opt} when the signaling is fully-informative, $\bar{\lambda}$ is also implementable.
\end{proof}
\vspace{-2.5em}
\section{Proof of Proposition \ref{prop:v_lambda}}
\label{app:V_lambda}
\begin{proof}
    It suffices to note that $\mu=\theta(\lambda)$ naturally satisfies \eqref{eq:bp}, and $f(\mu)=0$ induces \eqref{eq: ic}. Therefore, any point $(\mu, f(\mu), v)\in \{(\theta(\lambda), 0, v)\in co(F^\lambda)\}$ is feasible for \eqref{eq;sender-max}. Therefore, $V^\lambda$, being a convex combination of such points, represents the maximal value.
\end{proof}

\vspace{-1.5em}
\section{Proof of Theorem \ref{prop:Lagrangian}} 
\label{app:Lagrangian}
\begin{proof}
We begin with the necessity. As $(\theta(\lambda),0, V^\lambda )$ is a boundary point of a closed convex set, the separating hyperplane theorem tells that there exists a normal vector $b=(-\varphi, \psi, 1)\in \R^{|\Omega|+2}$ and a scalar $\rho$ such that $\langle b, x\rangle \leq \rho$ for all $x\in co(F^\lambda)$, where the equality holds for $x=(\theta(\lambda), 0, V^\lambda)$. Rearranging terms in this inner product, we obtain that $ \mathcal{L}(\mu, \psi,\varphi)\leq \rho $. 

It remains to show that $\mathcal{L}(\mu, \psi,\varphi)=\rho$ for all $\mu\in \{\mu:\tau^\lambda(\mu)>0\}$. Suppose, for the sake of contradiction, that there exists some $\mu\in \operatorname{supp}(\tau^\lambda)$ such that  $\mathcal{L}(\mu, \psi, \varphi)<\rho$. Note that $\mathcal{L}(\mu, \psi, \varphi)\leq\rho$, then $V^\lambda=\E_{\tau^\lambda}[\mathcal{L}(\mu, \psi, \varphi)]< \rho$. Rearranging terms, we obtain $\langle b, (\theta(\lambda), 0, V^\lambda)\rangle < \rho$, which contradicts the fact that the supporting hyperplane passes through the point $(\theta(\lambda), 0, V^\lambda)$.

For the sufficiency part, if $v_S(\mu)+\psi f(\mu)\leq \rho + \langle \varphi, \mu\rangle $ for all $\mu \in [\underline{\mu}, \overline{\mu}]$, then for any $\tau^d$, 
\begin{equation*}
    \E_{\tau^d }[v_S(\mu)]+\psi \E_{\tau^d}[f(\mu)]\leq \rho+\E_{\tau^d}[\langle \varphi, \mu\rangle].
\end{equation*}
Since $\tau^\lambda$ satisfies \eqref{eq:bp} and \eqref{eq: ic}, the above reduces to $\E_{\tau^\lambda }[v_S(\mu)]\leq \rho+ \langle \varphi, \theta(\lambda)\rangle$. If $\tau^\lambda$ is such that $\mathcal{L}(\mu,\psi, \varphi)=\rho$, for all $\mu\in\operatorname{supp}(\tau^\lambda)$, then $\E_{\tau^\lambda}[v_S(\mu)]=\rho+\langle \varphi, \theta(\lambda)\rangle $, meaning that the expected utility $\E_\tau[v_S(\mu)]$ reaches the upper bound at $\tau^\lambda$. 
\end{proof}

\end{document}